\pdfoutput=1
\documentclass[10pt, conference, compsocconf]{IEEEtran}

\newcommand{\frage}[1]{}

\setlength{\intextsep}{0pt} 
\setlength{\textfloatsep}{0pt}

\usepackage{algorithm2e}
\usepackage{amsthm}
\usepackage[english]{babel}
\usepackage{caption}
\usepackage{cite}
\usepackage{enumerate}
\usepackage{etoolbox}
\usepackage{graphicx}
\usepackage{lipsum}
\usepackage{xcolor}
\usepackage{subfigure}

\providetoggle{long}
\settoggle{long}{true}

\newcommand{\Ohsmall}[1]{\mathcal{O}(#1)}
\newcommand{\Oh}[1]{\mathcal{O}\left(#1\right)}

\newtheorem{theorem}{Theorem}
\newtheorem{lemma}{Lemma}


\begin{document}
	\pagestyle{plain}
	\theoremstyle{definition}
	\newtheorem*{inv*}{Invariants}	
	
	\title{Fast Parallel Operations on Search Trees}
	
	\author{\IEEEauthorblockN{Yaroslav Akhremtsev}
		\IEEEauthorblockA{Institute of Theoretical Informatics\\
			KIT\\
			Karlsruhe, Germany\\
			Email: yaroslav.akhremtsev@kit.edu}
		\and
		\IEEEauthorblockN{Peter Sanders}
		\IEEEauthorblockA{Institute of Theoretical Informatics\\
			KIT\\
			Karlsruhe, Germany\\
			Email: peter.sanders@kit.edu}
	}
	
	\maketitle
	\begin{abstract}
		Using $(a,b)$-trees as an example, we show how to perform
		a parallel split with logarithmic latency
		and parallel join, bulk
		updates, intersection, union (or merge), and (symmetric) set difference
		with logarithmic latency and with information theoretically optimal
		work. We present both asymptotically optimal solutions and simplified
		versions that perform well in practice -- they are several times faster
		than previous implementations.
	\end{abstract}
	\section{Introduction}
	
	Sorted sequences that support updates and search in logarithmic time
	are among the most versatile and widely used data structures. For the most frequent case
	of elements that can only be compared, search trees are the most
	widely used representation. When practical performance is an issue,
	$(a,b)$-trees are very successful since they exhibit
	better cache efficiency than most alternatives.
	
	Since in recent years Moore's law only gives further improvements of
	CPU performance by allowing machines with more and more cores, it has
	become a major issue to also parallelize data structures such as
	search trees. There is abundant work on \emph{concurrent data
		structures} that allows asynchronous access by multiple threads
	\cite{LehYao81,HLLS06, BCCOK10}. However, these approaches are not scalable in
	the worst case, when all threads try to update the same part of the
	data structure. Even for benign inputs, the overhead for locking or
	lock-free thread coordination makes asynchronous concurrent data
	structures much slower than using \emph{bulk-operations}
	\cite{FriasS07,sjc-palmbmcp-11,EKS14} or operations manipulating
	entire trees \cite{DBLP:journals/corr/BlellochFS16}.
	We concentrate on bulk operations here and show in Section~\ref{furtherOps}
	that they can be reduced to tree operations.\iftoggle{long}{%
		\footnote{It is less clear how to go the opposite way -- viewing bulk
			operations as whole-tree operations -- in the general mixed case
			including interactions between operations.}%
	}{}
	The idea behind bulk
	operations is to perform a batch of operations in parallel.  A
	particularly practical approach is to sort the updates by key and to
	simultaneously split both the update sequence and the search tree in
	such a way that the problem is decomposed into one sequential bulk
	update problem for each processor \cite{FriasS07,EKS14}. Our main
	contribution is to improve this approach in two ways making it
	essentially optimal. Let $m$ denote the size of the sequence to be
	updated and $k\leq m$ denote the number of updates.  Also assume that the
	update sequence is already sorted by key.  On the one hand, we reduce
	the span for a bulk update from $\Ohsmall{\log k\log m}$ to $\Ohsmall{\log m}$.
	On the other hand, we reduce the work from $\Ohsmall{k\log m}$ to
	$\Ohsmall{k\log\frac{m}{k}}$ (to simplify special case treatments for the case $k\approx m$,
	in this paper we define the logarithm to be at least one) which
	is information-theoretically optimal
	in the comparison based model. After introducing the
	sequential tools in Section~\ref{sec:preliminaries}, we present logarithmic time
	parallel algorithms for splitting and joining multiple $(a,b)$-trees in
	Sections~\ref{sec:parallel_split} and \ref{sec:parallel_join}
	respectively.  These are then used for a parallel bulk update with the
	claimed bounds in Section~\ref{sec:bulk_updates}.
	For the detailed proofs we refer to the full version of the paper.
	
	\subsection*{Related Work}
	We begin with the work on sequential data structures.
	Kaplan and Tarjan described finger trees in~\cite{Kaplan96purelyfunctional} with
	access, insert, and delete operations in logarithmic time, and joining of two trees in
	$\Ohsmall{\log\log m}$ time. Brodal et al. described a catenable sorted lists \cite{Brodal2006purelyfunctional}
	with access, insert, and delete operations in logarithmic time, and combination of two lists in
	worst case constant time. The authors state that it is hard to implement a split operation and
	it will lead to access, insert and delete operations in $\Ohsmall{\log m \log\log m}$ time.
	
	A significant amount of the research has been done for operations on pairs of trees, in particular,
	union, intersection and difference. A lower bound for the union operation
	is $\Omega(k\log\frac{m}{k})$ in the comparison based model~\cite{Brown:1979:FMA:322123.322127}.
	Brown and Tarjan~\cite{Brown:1979:FMA:322123.322127} presented an optimal union algorithm for AVL trees and $2$-$3$ trees.
	They also published an optimal algorithm
	for union level-linked $2$-$3$ trees~\cite{DBLP:journals/siamcomp/BrownT80}.
	The same results were achieved for the level-linked $(a, b)$ trees~\cite{DBLP:journals/acta/HuddlestonM82}.
	
	Paul, Vishkin and Wagener gave the first parallel algorithm for 
	search, insertion and deletion algorithms for $2$-$3$ trees on EREW PRAMs~\cite{paul1983parallel}.
	The same result was achieved for B-trees~\cite{Higham1994329}
	and for red-black trees~\cite{Park2001415}. All these algorithms perform $\Ohsmall{k\log m}$ work.
	The first EREW PRAM union algorithm with $\Ohsmall{k\log\frac{m}{k}}$ work 
	and $\Ohsmall{\log m}$ span
	was given
	by Katajainen et al.~\cite{9c429ad074cd11dbbee902004c4f4f50}. But this algorithm contains a false proposition
	and the above bounds do not hold~\cite{Blelloch:1998:FSO:277651.277660}.
	Blelloch et al.~\cite{Blelloch:1998:FSO:277651.277660} presented a parallel union
	algorithm with expected $\Ohsmall{k\log \frac{m}{k}}$
	work and $\Ohsmall{\log k}$ span for the EREW PRAM with scan operation. This implies
	$\Ohsmall{\log^2 m}$ span on a plain EREW PRAM.
	Recently, they presented a framework that implements union set operation 
	for four balancing schemes of search trees~\cite{DBLP:journals/corr/BlellochFS16}.
	Each scheme has its own join operation; all other operations are implemented using it.
	Experiments in \cite{DBLP:journals/corr/BlellochFS16} indicate that our
	algorithms are faster -- probably because their implementations are
	based on binary search trees which are less cache efficient than
	$(a,b)$-trees.

	\section{Preliminaries}\label{sec:preliminaries}
	
	We consider weak $(a, b)$-trees, where $b \ge 2a$ \cite{DBLP:journals/acta/HuddlestonM82}.
	A search tree $T$ is an $(a, b)$-tree if%
	\iftoggle{long}{%
		\begin{itemize}
			\item all leaves of $T$ have the same depth
			\item all nodes of $T$ have degree not greater than $b$
			\item all nodes of $T$ except the root have degree $\geq a$
			\item the root of $T$ has degree not less than $\min(2, |T|)$
			\item the values are stored in the leaves
		\end{itemize}
	}{%
	~(1) all leaves of $T$ have the same depth; (2) all nodes of $T$ have degree not greater than $b$;
	(3) all nodes of $T$ except the root have degree $\geq a$; (4) the root of $T$ has degree not less
	than $\min(2, |T|)$; (5) the values are stored in the leaves;
}
Let $m$ be an upper bound of the size of all involved trees.
We denote the parent of some node $n$
by $p(n)$. The \textit{rank} $r(n)$ of the node $n$ is the number of nodes (including $n$) on the path
from $n$ to any leaf in its subtrees. 
We define the rank of a tree $T$, denoted $r(T)$, to be the rank of its root.
We denote the left-most (right-most) path from the root to the
leaf as the left (right) \emph{spine}. We employ two operations to process the nodes of the tree.
The \textit{fuse} operation fuses nodes $n_1$, $n_2$ using a \textit{splitter key} --
this key $\le$ any key in $n_1$ and $\ge$ any key in
$n_2$ -- into a node $n$. The \textit{split} operation splits a node $n$ into two nodes $n_1$, $n_2$ and
a splitter key such that $n_1$ contains the first $\lfloor\frac{d}{2}\rfloor$ (here $d$ is the 
degree of $n$) children of $n$,
$n_2$ contains remaining $\lceil\frac{d}{2}\rceil$ children of $n$, and the $\lfloor\frac{d}{2}\rfloor$th
key of $n$ is the splitter key.

Here we explain the basic algorithms for joining two trees and splitting a tree
into two trees that are basis of all our algorithms.
A more detailed description can be found in \cite{DBLP:books/sp/MehlhornS2008}. 
\paragraph{Joining Two Trees}\label{paragraph:basic_join}
We now present an algorithm to join two $(a, b)$-trees
$T_1$ and $T_2$ such that all elements of $T_1$ are less than
or equal to
the elements of $T_2$ and $r(T_1) \ge r(T_2)$. This
algorithm joins the trees $T_1$ and $T_2$
into a tree $T$ in time
\iftoggle{long}{%
	$\mathcal{O}(r(T_1) - r(T_2) + 1) = \mathcal{O}(\log (\max(|T_1|,
	|T_2|))) = \mathcal{O}(\log |T|)$.
}{%
$\mathcal{O}(r(T_1) - r(T_2) + 1) = \mathcal{O}(\log (\max(|T_1|,
|T_2|)))$.
}
Our main goal is to ensure
that the resulting tree is balanced. First, we descend $r(T_1) - r(T_2)$
nodes on the right spine until we reach the node $n$ such that $r(n) = r(T_2)$.
Next, we choose the largest key
in $T_1$ as the splitter. If the degree of the
root of $T_2$ or $n$ is less than $a$ then we fuse them into $n$.
If the degree of $n$ after the fuse is $\le$ $b$ then the join operation ends.
Otherwise, we split $n$ into
$n$ and the root of $T_2$ and update the splitter key (if
necessary). The degrees of $n$ and the root of $T_2$ are less than $b$ after the split, since we fuse
them if at least one of them has degree less than $a$. We insert the splitter key and the pointer
to the root of $T_2$ into $p(n)$. Further, the join operation proceeds as 
an insert operation \cite{DBLP:books/sp/MehlhornS2008}. We propagate splits up the right
spine until all nodes have degree less than or equal to $b$
or a new root is created. The case $r(T_1) \le r(T_2)$ is handled similarly.

\paragraph{Sequential Split}\label{paragraph:basic_split}
We now describe how to split an $(a, b)$-tree $T$ at a given element $x$ 
into two trees $T_1$ and $T_2$, such that all elements
in the tree $T_1$ are $\le$ $x$, and all elements in the tree $T_2$ are $\ge$ $x$.
First, we locate a leaf $y$ in $T$ containing an element of minimum value in $T$
greater than $x$. 
Now consider the path from the root to the leaf $y$. We split each node $v$ on the path
into the two nodes
$v_{\mathrm{left}}$ and $v_{\mathrm{right}}$, such that $v_{\mathrm{left}}$ contains all children of $v$
less than or equal
to $x$ and $v_{\mathrm{right}}$ the rest. We let $v_{\mathrm{left}}$ and $v_{\mathrm{right}}$ be the roots of
$(a, b)$-trees.
Next, we continue to join all the left trees with the roots $v_{\mathrm{left}}$ among the path from
the leaf $y$
to the root of $T$ using the join algorithm described above.
As the result we obtain $T_1$. The same join operations are performed with the
right trees, which give us $T_2$.
All join operations can be performed
in total time
$\mathcal{O}(\log |T|)$%
\iftoggle{long}{%
	, since the left and right trees have increasing height.
	Consider the roots $v_1, v_2, \dots, v_k$ of the left trees and their corresponding ranks
	$r_1, r_2, \dots, r_k$, where $r_1 \le r_2 \le \dots \le r_k$. We first join trees with roots
	$v_1$ and $v_2$.
	Next, we join $v_3$ with the result of the previous join, and so on. The first join operation
	takes time
	$\mathcal{O}(r_2 - r_1 + 1)$, the next one takes time $\mathcal{O}(r_3 - r_2 + 1)$.
	Thus, the total time is
	$\sum_{i = 1}^{k - 1} \mathcal{O}(r_{i + 1} - r_i + 1) = \mathcal{O}(k + r_k - r_1) =
	\mathcal{O}(\log |T|)$.
}{.}%

\paragraph{Sequential Union of a Sorted Sequence with an $(a, b)$-tree}
\label{paragraph:seq_merge}
Here we present an algorithm to union an $(a, b)$-tree
and a sorted sequence $I = \langle i_1, \dots, i_k\rangle$
This algorithm is similar to the algorithm
described in~\cite{Brown:1979:FMA:322123.322127}.
Let $l_j$ denote the leaf where element $i_j$ will be inserted
(i.e., the leaf containing the smallest element with key $\geq i_j$).
First, we locate $l_1$ by following the path from the root of $T$ to $l_1$
and saving this root-leaf path on a stack $P$.
When $l_1$ is located, we insert $i_1$ there
(possibly splitting that node and generating a splitter key to be inserted in the parent).
Next, we pop elements from $P$ until we have found the lowest common ancestor of $l_1$ and $l_2$.
We then reverse the search direction now searching for $l_2$. We repeat this process until 
all elements are inserted. 
We visit $\Ohsmall{k\log\frac{m}{k}}$ nodes and perform $\Ohsmall{k\log\frac{m}{k}}$ splits
during the course of the algorithm according to Theorems 3 and 4 from~\cite{DBLP:journals/acta/HuddlestonM82}.
Hence, the total work of the algorithm is $\Ohsmall{k\log\frac{m}{k}}$ even without using level-linked
$(a, b)$-trees as in~\cite{DBLP:journals/acta/HuddlestonM82}.

\section{Parallel Split}\label{sec:parallel_split}
The parallel split algorithm resembles the sequential version, but we need to split a
tree $T$ into $k$ subtrees
$T_1, \dots, T_{k}$ using a sorted sequence of separating keys $S = \langle s_1, \dots, s_{k}\rangle$,
where tree $T_i$ contains keys greater than $s_{i - 1}$ and less or equal than $s_i$ 
(we define $s_0 =-\infty$ and $s_k = \infty$ to avoid special cases).
For simplicity we assume
that $k$ is divisible by $p$ (the number of processors). If $k > p$ then we split the tree $T$ into $p$
subtrees according to the subset
of separators $S' = \langle s_{k/p}, s_{2k/p}, \dots, s_{(p - 1)k/p}\rangle$.
Afterwards, each PE (processor element) performs 
$k/p - 1$ additional sequential splits to obtain $k$ subtrees. From this point on we assume that $p = k$.

\begin{theorem} \label{th:parallel_split}
	We can split a tree $T$ into $k$ trees with $\Ohsmall{k \log |T|}$ work and $\Ohsmall{\log |T|}$ span.
\end{theorem}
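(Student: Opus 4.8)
The plan is to run, in parallel, essentially one sequential split per output tree, while avoiding write conflicts on the shared input tree $T$ by letting every processor build its own output tree out of fresh nodes, inheriting untouched subtrees of $T$ only through read-only pointers. For $j=1,\dots,k-1$ let $\ell_j$ be the leaf of $T$ found by searching for $s_j$, with sentinels $\ell_0$ before the first leaf and $\ell_k$ after the last. In the first phase, processor $j$ descends from the root of $T$ to $\ell_j$, recording the root-to-leaf path $\pi_j$ and, at each node on it, which child the search entered; this takes $\Ohsmall{\log|T|}$ span, $\Ohsmall{k\log|T|}$ work, and reads from $T$ only.

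In the second phase, processor $i$ (for $1\le i\le k$) builds $T_i$, the tree of the keys in $(s_{i-1},s_i]$, from the two boundary paths $\pi_{i-1},\pi_i$, which coincide down to the divergence node $v_i^{\ast}:=\mathrm{LCA}(\ell_{i-1},\ell_i)$ (the degenerate cases in which $T_i$ is empty or lives inside one leaf are trivial). As in the sequential split, but now on two boundaries: below $v_i^{\ast}$ on $\pi_{i-1}$ every node is split at $s_{i-1}$ and its ``$>s_{i-1}$'' side is kept; below $v_i^{\ast}$ on $\pi_i$ every node is split at $s_i$ and its ``$\le s_i$'' side is kept; and at $v_i^{\ast}$ itself we keep the children strictly between the two search children. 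Two structural facts keep this conflict-free. First, the set of separators whose search path passes through a fixed node is a contiguous range of indices (because the $\ell_j$ are sorted), so the children of a node split cleanly into pieces, one per incident output tree. Second, since $s_{i-1}$ and $s_i$ are consecutive separators, no separator routes into any subtree lying strictly between the two boundary paths of $T_i$; hence every whole subtree of $T$ falls into the key range of exactly one $T_i$ and is merely pointed to by that processor, and the only freshly created nodes are the $\Ohsmall{\log|T|}$ split pieces each processor produces along its own two boundaries. Degree-one pieces are shortcut to their child, and under-full pieces are healed by the fuses inside the basic join, exactly as in the sequential split.

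In the third phase, processor $i$ joins its pieces. Read upward from $\ell_{i-1}$, the $\pi_{i-1}$ pieces have nondecreasing rank; likewise the $\pi_i$ pieces read upward from $\ell_i$; and every rank is $\Ohsmall{\log|T|}$. I would join the $\pi_{i-1}$ chain bottom-up into a tree $L$, the $\pi_i$ chain bottom-up into a tree $R$, and finally join $L$ with the result of joining the middle piece $M$ (at $v_i^{\ast}$) and $R$; all these joins are legal because all keys of $L$ are $\le$ those of $M$, which are $\le$ those of $R$. Each chain-join telescopes exactly as in the sequential-split analysis, $\sum_j \Ohsmall{r_{j+1}-r_j+1}=\Ohsmall{q+r_q-r_1}=\Ohsmall{\log|T|}$ for $q=\Ohsmall{\log|T|}$ pieces, and the two remaining joins cost $\Ohsmall{\log|T|}$ because the ranks of $L$, $M$, and $R$ are all $\Ohsmall{\log|T|}$. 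Thus each $T_i$ is produced with $\Ohsmall{\log|T|}$ work and span; as the $k$ output trees are produced independently in parallel, total work is $\Ohsmall{k\log|T|}$ and span is $\Ohsmall{\log|T|}$. (When $p<k$, split $T$ into $p$ trees this way and let each PE finish with $k/p-1$ sequential splits, adding $\Ohsmall{k\log|T|}$ work.)

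The step I expect to be the main obstacle is the second phase: showing that the shared tree can be dismantled into these $\Ohsmall{k\log|T|}$ pieces with no concurrent writes and only $\Ohsmall{k\log|T|}$ total work. The two structural observations above are what make it go through -- each processor derives its two boundary chains purely locally from the recorded search paths, and ownership of every whole subtree is unambiguous -- so once the (read-only) searches are done, the rest is perfectly parallel and the only remaining estimate is the per-tree telescoping bound inherited from the sequential split.
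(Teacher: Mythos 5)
Your proposal is correct and follows essentially the same route as the paper: locate the boundary leaves in parallel, decompose each $T_i$ into a left chain, a central piece at the LCA of the two boundary paths, and a right chain, build all new pieces as processor-owned copies so that the shared tree is only read, and bound each processor's joins by the telescoping-rank argument from the sequential split. Your explicit justification of conflict-freeness (contiguity of separator ranges per node and unambiguous subtree ownership) is somewhat more detailed than the paper's, which simply asserts that all writes go to copies, but the algorithm and the $\Ohsmall{k\log|T|}$ work / $\Ohsmall{\log|T|}$ span accounting are the same.
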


Note that the algorithm is non-optimal -- it performs
$\Ohsmall{k \log |T|}$ work whereas the best sequential
algorithm splits a tree $T$ into $k$ subtrees in $\Ohsmall{k \log \frac{|T|}{k}}$ time.

We now describe the parallel split algorithm. First, the PE $i$ locates a leaf $l_i$ for each $s_i \in S$,
which contains the maximum element in $T$ less than or equal to $s_i$. Also, we save a first node
$r_i$ on the path from the root, where $l_{i - 1}$
and $l_i$ are in different subtrees. 
For $r_1$ this means a dummy node above the root.
Next, PE $i$ copies all nodes on the path from $l_i$ to $r_i$, but only keys $\leq  s_i$ and their
corresponding children. We consider these nodes to be the roots of $(a, b)$-trees and join
them as in the sequential
split algorithm. We can do this in $\mathcal{O}(\log m)$ time, as these trees have monotone or strictly
increasing ranks.
Let us refer to the resulting tree as $T_{\mathrm{right}}$. The same actions can be done on the path
from $l_{i - 1}$ to $r_i$,
except that we copy elements greater than $s_{i - 1}$ to new nodes. After joining
the new nodes we obtain 
a tree $T_{\mathrm{left}}$. We also build a tree $T_{\mathrm{central}}$ from the keys and corresponding children
of the node $r_i$ that are in the
range $(s_{i - 1}, s_i]$.  The last step is to join the trees $T_{\mathrm{left}}$,
$T_{\mathrm{central}}$ and $T_{\mathrm{right}}$. 

These operations can be done in parallel for each $i$,
since all write operations are performed on copies of the nodes owned by the processor performing
the respective operations. 
When we finish building the trees $T_1, \dots, T_k$ (we
use a barrier synchronization
to determine this in $\mathcal{O}(\log k)$ time) we erase all nodes
on the path in $T$ from the root to each
leaf $l_i$. Each PE $i$ then erases all nodes on the path from leaf $l_i$ to $r_i$, excluding $r_i$.

Each PE locates necessary leaf, builds trees $T_{\mathrm{left}}$, $T_{\mathrm{right}}$
by traversing up and down a path not longer than $\Ohsmall{\log |T|)}$ nodes.
Also each PE erases a sequence
of nodes not longer than the height of the tree $T_{\mathrm{right}}$, 
which is $\mathcal{O}(\log |T|)$. Therefore, Theorem~\ref{th:parallel_split} holds.

\section{Parallel Join} \label{sec:parallel_join}
We describe how to join $k$ trees $T_1, \dots, T_k$, where $m = \sum_i |T_i|$ and $p = k$, 
since this is the most interesting case.
When joining $k>p$ trees, we can assign $\leq \lceil k/p\rceil$ trees to each PE. 
First, we present a non-optimal parallel join algorithm.
Next, we present a modified sequential join algorithm.
Finally, we construct an optimal parallel join algorithm that combines
the non-optimal parallel join and the modified sequential join algorithms.

\begin{theorem} \label{th:optimal_parallel_join}
	We can join $k$ trees with $\Ohsmall{k \log \frac{m}{k}}$ work and $\Ohsmall{\log k + \log m}$ parallel time
	using $k$ processors on a CREW PRAM.
\end{theorem}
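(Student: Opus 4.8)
The plan is to follow the three-part program announced after the statement: (i) a parallel join that is span-optimal but does more work than necessary, (ii) a sharper sequential join, and (iii) a combination of the two. Observe first that the two work targets $\Ohsmall{k\log m}$ and $\Ohsmall{k\log\frac mk}$ coincide up to constants whenever $k\le m^{1-\varepsilon}$, so the only regime that needs care is $k$ close to $m$; this is what dictates the parameters in step~(iii).

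For step~(i) I would build a non-optimal parallel join out of the basic pairwise join. Each PE $i$ first copies the right spine of $T_i$ and the left spine of $T_{i+1}$ together with their rank labels; a prefix computation over the rank sequence $r(T_1),\dots,r(T_k)$ (in the spirit of all-nearest-smaller-values) then tells every PE at which levels its spine fragments have to be fused, and with which neighbours. Performing all these fusions bottom-up, one rank level at a time, yields $T$ in $\Ohsmall{\log m}$ levels plus $\Ohsmall{\log k}$ for the prefix computation and the final barrier; since every fragment a PE touches has length $\Ohsmall{\log m}$, the total work is $\Ohsmall{k\log m}$. The delicate point here — and the one I expect to be the main obstacle — is exactly this level-by-level stitching: keeping every level at $\Ohsmall1$ depth when a whole run of consecutive spine nodes has to be rebalanced into nodes of degree in $[a,b]$, without paying an extra $\Ohsmall{\log k}$ per level, which would spoil the span.

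For step~(ii), the modified sequential join processes $T_1,\dots,T_{k'}$ left to right, maintaining the running result $R$ together with a finger on its right spine. When $T_{i+1}$ is joined, the fusion point is reached from the finger rather than from the root — moving up or down the right spine — and node splits are propagated upward with the standard amortized-$\Ohsmall1$ argument for $(a,b)$-tree insertions, exactly as in the sequential union of a sorted sequence with an $(a,b)$-tree from Section~\ref{sec:preliminaries}. The claim to establish by a potential argument (the awkward case being $r(T_{i+1})>r(R)$, where the fusion lies near the top of $T_{i+1}$ and the finger must be re-anchored on the new right spine) is that the whole computation costs $\Ohsmall{\sum_i\log(|T_i|+1)}$ work; by concavity of $\log$ this is $\Ohsmall{k'\log\frac{m'}{k'}}$ when $\sum_i|T_i|=m'$.

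Step~(iii) glues the two together. Compute the prefix sums of $\log(|T_1|+1),\dots,\log(|T_k|+1)$ in $\Ohsmall{\log k}$ time and cut the sequence into maximal blocks of consecutive trees whose $\log$-sizes sum to at most $\log m$; since each single $\log(|T_i|+1)$ is itself at most $\log m$, a counting argument gives $k'=\Ohsmall{1+\tfrac1{\log m}\sum_i\log(|T_i|+1)}=\Ohsmall{1+k\log\tfrac mk/\log m}$ blocks. Phase~1: each block is handed to one PE, which runs the modified sequential join on it; by construction this costs $\Ohsmall{\log m}$ span per block and $\Ohsmall{\sum_i\log(|T_i|+1)}=\Ohsmall{k\log\frac mk}$ work in total (concavity again). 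Phase~2: run the step-(i) parallel join on the $k'$ resulting trees, costing $\Ohsmall{\log k'+\log m}=\Ohsmall{\log m}$ span and $\Ohsmall{k'\log m}=\Ohsmall{k\log\frac mk}$ work, the last estimate using $k\log\frac mk=\Omega(\log m)$. Adding the $\Ohsmall{\log k}$ for the prefix sums and barriers yields $\Ohsmall{k\log\frac mk}$ work and $\Ohsmall{\log k+\log m}$ span with $k$ processors, and since the pipeline writes only to private copies and reads are the only shared accesses, a CREW PRAM suffices.
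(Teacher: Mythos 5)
Your overall architecture matches the paper's: a span-optimal but work-suboptimal parallel join, a cheap sequential join used on small groups, and a two-phase combination. Your step~(iii) is even a slightly different (and valid) decomposition -- variable blocks cut by prefix sums of $\log(|T_i|+1)$ rather than the paper's fixed groups of $\log k$ trees -- and it has the nice feature that it only needs the weaker $\Ohsmall{\sum_i\log(|T_i|+1)}$ bound for the sequential phase, which is a standard finger argument, instead of the paper's stronger amortized-constant-per-join bound (Lemma~\ref{lemma:seq_join_trees}). The arithmetic in step~(iii) is correct, including the use of $k\log\frac mk=\Omega(\log m)$.

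The genuine gap is in step~(i), and you name it yourself but do not close it: you need every level of the stitching to complete in $\Ohsmall{1}$ depth even when a whole run of consecutive spine nodes overflows, and without that the span degrades to $\Ohsmall{\log m\log k}$, which is exactly the bound the theorem is supposed to beat. This is where the paper spends essentially all of its technical effort: it (a) preprocesses each tree so that no degree-$b$ node lies on a spine, (b) proves (Lemma~\ref{lemma:path_growth}) that a single pairwise join extends any maximal run of degree-$b$ spine nodes by at most one node, so the PE freed by that join can be permanently assigned to the new degree-$b$ node, (c) uses these dedicated PEs to split an entire run of degree-$b$ nodes simultaneously in constant time (Section~\ref{subsubsec:split_b_degree_seq}), and (d) maintains constant-time access to a spine node of given rank across joins via stacks of pointer arrays (Section~\ref{subsubsec:node_retrieve_par}), yielding a constant-time pairwise join (Lemma~\ref{lemma:const_cost_par}) and hence Lemma~\ref{lemma:nonoptimal_parallel_join}. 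Your ``all levels at once'' ANSV-style variant also has a problem the pairwise scheme is specifically designed to avoid: when many consecutive short trees attach at the same rank of a tall neighbour, a single spine node receives many new children in one level, so rebalancing even one node into degree-$[a,b]$ pieces is itself a non-constant-time parallel subproblem; in the pairwise rounds each node gains at most one child per round. Until you supply a mechanism of this kind, the $\Ohsmall{\log k+\log m}$ span claim in step~(i), and therefore the theorem, is not established.
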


Note that the algorithm is optimal, since the best sequential algorithm joins $k$
$(a, b)$-trees in $\Ohsmall{k \log \frac{m}{k}} $ time~\cite{Moffat}. 

\subsection{Non-optimal parallel join.} \label{subsec:non_opt_join}
Let us first explain the basics of the parallel algorithm.
The simple solution is to join
pairs of trees in parallel (parallel pairwise join).
After each group of parallel join operations, the number of trees halves. Hence, each tree
takes part in at most $\lceil\log k\rceil$ 
join operations. Each join operation takes $\mathcal{O}(\log m)$ time.
That is, we can join $k$ trees in time $\Ohsmall{\log k \log m}$.

We improve this bound to $\mathcal{O}(\log m + \log k)$ by reducing
the time for a join operation to a constant by
solving the two following problems in constant time: 
finding a node with a specific rank on a spine of a tree and
performing a sequence of splits of nodes with degree $b$.

We describe the first issue in Section~\ref{subsubsec:node_retrieve_par}:
we can retrieve a spine node with specified rank using
an array $a$ where the $a[i]$ points to the spine node with rank $i$. 
The only challenge is to keep this array up to date in constant time.

We describe the second issue in Section~\ref{subsubsec:split_b_degree_seq}:
how to perform all required node splits in constant time.
The main observation is that if there are no nodes of degree $b$ on the
spines then there are no node splits during the course of the algorithm. 
So we preprocess
each tree $T_i$ such that there are no nodes of degree~$b$ on their left/right
spines by traversing each tree in a bottom up fashion.
The preprocessing can be done in parallel in $\Ohsmall{\log m}$ time.
Now consider the task of joining a sequence of preprocessed trees. The only
nodes of degree $b$
that could be on the left/right spines will appear during the course of the
join algorithm.
We can take advantage of this fact
by assigning a dedicated PE to each node of degree $b$.
This PE will split the node 
when needed.

We describe how to maintain the sizes of the subtrees in Section~\ref{subsubsec:subtree_sizes}.
This allows us
to search for the $i$-th smallest element in a tree $T$ in $\mathcal{O}(\log |T|)$ time.
We use the search of the $i$-th smallest element in Section~\ref{sec:selection}.

Note that we dedicate several tasks to one PE during the course of the algorithm but
this does not affect the resulting time bound.
Finally, we combine the above ideas
into the modified join algorithm in~\ref{subsubsec:join_algorithm}.
This algorithm joins $k$ trees in $\Ohsmall{\log k + \log m}$ time.

\begin{lemma}
	\label{lemma:nonoptimal_parallel_join}
	We can join $k$ trees in $\Ohsmall{\log k + \log m}$ time using $k$ processors on a CREW PRAM.
\end{lemma}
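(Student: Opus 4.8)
The plan is to run the parallel pairwise join of Section~\ref{subsec:non_opt_join}, but to make each individual pairwise join cost $\Ohsmall{1}$ time after a single shared preprocessing. Organize the computation into $\lceil\log k\rceil$ rounds: in round $t$ the $\lceil k/2^{t-1}\rceil$ surviving trees are paired and all pairs are joined simultaneously, one PE per pair, the remaining PEs being reserved for the auxiliary tasks below. Every input tree then takes part in at most $\lceil\log k\rceil$ joins, so it is enough to exhibit (i) an $\Ohsmall{\log m}$ preprocessing of all trees, and (ii) an $\Ohsmall{1}$ implementation of a single join within a round.

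For the preprocessing I would, for every input tree $T_i$ in parallel, walk its left and right spines bottom-up and eagerly split every node of degree $b$, so that afterwards no spine node of any tree has degree $b$; in the same sweep, build for each tree and each of its spines the array mapping a rank to the spine node of that rank (Section~\ref{subsubsec:node_retrieve_par}) and initialize the subtree-size fields (Section~\ref{subsubsec:subtree_sizes}). These computations are independent over the $k$ trees and over the two spines of a tree, so they finish in $\Ohsmall{\log m}$ time with $k$ PEs.

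For one join, say of $T_1$ and $T_2$ with $r(T_1)\ge r(T_2)$: the owning PE reads the node $n$ of rank $r(T_2)$ from $T_1$'s right-spine array in $\Ohsmall{1}$, performs at $n$ exactly the work of the sequential join (the optional fuse, at most one split of $n$, insertion of the splitter and of the pointer to the root of $T_2$ into $p(n)$), and rebuilds the rank array of the merged tree, whose entries at and below the junction are inherited from $T_2$ and above it from $T_1$, up to a global $+1$ if a new root is created --- all of this is $\Ohsmall{1}$ offset bookkeeping. The structural fact I would prove is that joining two trees whose spines contain no degree-$b$ node raises the degree of at most one spine node (either $n$ or $p(n)$), so a join performs at most the single split of $n$ and leaves at most one newly full node: there is no unbounded split cascade inside a join. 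Consequently a tree carries only $\Ohsmall{\log k}$ degree-$b$ spine nodes after all its joins, and as each such node is created I would hand it a reserved PE (the total created over all rounds is $\Ohsmall{k}$, and PEs of nodes that later split are recycled, so this stays within the $k$ PEs). Whenever a subsequent insertion turns a run of these watched nodes over-full, their PEs carry out the required splits together in $\Ohsmall{1}$ time by the mechanism of Section~\ref{subsubsec:split_b_degree_seq}.

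Summing up, the preprocessing costs $\Ohsmall{\log m}$ and each of the $\Ohsmall{\log k}$ rounds costs $\Ohsmall{1}$, for $\Ohsmall{\log k + \log m}$ in total; the only concurrency is concurrent reading of the shared rank arrays, while every write is confined to the tree owned by the writing PE or its delegated split-PE, so the algorithm fits the CREW PRAM. I expect the main obstacle to be exactly the constant-time handling of the degree-$b$ splits: one must show that the chain of over-full nodes produced by one join is resolved in parallel rather than sequentially up the spine, and that the rank arrays and size fields remain consistent while several joins and their delegated split-PEs touch the forest in the same round. This is what Sections~\ref{subsubsec:node_retrieve_par}--\ref{subsubsec:split_b_degree_seq} are designed for, and I would invoke them here.
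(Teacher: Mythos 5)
Your proposal is correct and follows essentially the same route as the paper: preprocess all trees in $\Ohsmall{\log m}$ time to eliminate degree-$b$ spine nodes and build rank arrays, then run $\Ohsmall{\log k}$ rounds of pairwise joins, each in constant time, using the growth-by-at-most-one property of degree-$b$ sequences to hand a dedicated PE to every newly full node and split such runs in parallel. The one place where your sketch is lighter than the paper is the ``$\Ohsmall{1}$ offset bookkeeping'' for the rank arrays: after many joins the spine is a patchwork of segments from many original trees, which is exactly why the paper maintains a stack of (array, interval) pairs with the parallel pop/combine machinery of Section~\ref{subsubsec:node_retrieve_par} --- but you explicitly invoke that mechanism, so this is a matter of detail rather than a gap.
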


We explain how to obtain the result of Lemma~\ref{lemma:nonoptimal_parallel_join} 
in Section~\ref{subsubsec:join_algorithm}.
As a result we obtain an algorithm that performs join of $k$ trees
in $\mathcal{O}(\log m + \log k)$ time, has work $\mathcal{O}(k \log m)$,
and consumes $\Ohsmall{k \log m}$ memory.

\subsubsection{Fast Access to Spine Nodes by Rank}
\label{subsubsec:node_retrieve_par}

Suppose we need to retrieve a node with a certain rank on the right spine of a tree $U$.
The case for the
left spine is similar. We maintain an array of pointers to the nodes on the right spine of $U$
such that the $i$-th element of the array points
to the node with rank $i$ on the right spine.
\iftoggle{long}{%
	See Figure~\ref{fig:join_preliminaries_a}.
}{}%
We build this array during the preprocessing step.
We can retrieve a node by its rank in constant time with such an array.
The only problem is that after a join of $U$ with another tree some pointers
of the array point to nodes that are not on the right spine.
We describe how to maintain the pointers to the spine nodes throughout the join operations up-to-date.

Suppose that we have joined two trees $U$ and $V$, where $r(U) \ge r(V)$.
Let $R_U$ and $R_V$ denote the arrays of the pointers to the nodes on the right spines of $U$ and $V$ respectively.
The first $r(V)$ pointers of $R_U$ point to the nodes that are not on the
right spine anymore. Consequently, nodes that were on the right spine of the tree $V$ are on the right
spine of $U$ now. The first $r(V)$ elements of $R_V$ point to the nodes on the right spine of $U$ with 
the ranks in $[1, r(V)]$, and elements of $R_U$ with the indices in $(r(V), r(U)]$ 
point to the nodes on the right spine of $U$ with the rank in $(r(V), r(U)]$.
Hence, the interval $[1, r(U)]$ is split into the two subintervals: $[1, r(V)]$ and $(r(V), r(U)]$.
See Figures~\ref{fig:join_preliminaries_b} and \ref{fig:join_preliminaries_c}.

Now we show how to retrieve a node by its rank in constant time during a sequence of the join operations.
First, we explain how to maintain the arrays with up-to-date pointers to the right spine after the join operation.
Suppose we join trees $U$ and $V$ and we need to know the node $n \in U$ where $r(n) = r(V)$.
We maintain stacks $S_U$ and $S_V$ for trees $U$ and $V$ respectively, which are
implemented using linked lists.
Each element of these stacks is a pair $(R, I)$, where $R$ is an array 
with pointers to the right spine of the corresponding tree. $I$ is
an interval of the indices of the elements in $R$, such that $R[i]$ $(\forall i \in I)$
points to a node on the right spine of the tree.
We maintain the following invariants for each tree $T_i$ and its corresponding stack 
during the course of the algorithm:
\begin{enumerate}
	\item \label{inv_1}
	The stack $S_i$ contains disjoint intervals
	$[r_i, r_{i + 1} - 1]$, for $i = 1, \dots, |S_i|, r_1 = 1, r_{|S_i| + 1} = r(T_i) + 1$.
	These intervals are arranged in sorted order in $S_i$.
	
	\item \label{inv_2}
	$\bigcup_{(R, I) \in S_i} I = [1, r(T_i)]$.
	
	\item \label{inv_3}
	The element $R[i]$ $(i \in I)$ points to the node $n$ on the
	right spine where $r(n) = i$ for $\forall ~(R, I) \in S_i$.
\end{enumerate}

\begin{figure}[h]
	\centering
	\iftoggle{long}{%
		\subfigure[]{\includegraphics[scale = 0.75]{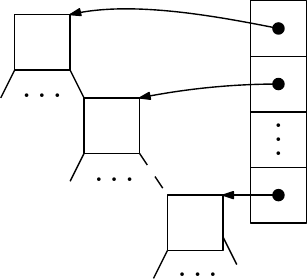}\label{fig:join_preliminaries_a}}
		\hfil
	}{}%
	\subfigure[]{\includegraphics[scale = 0.85]{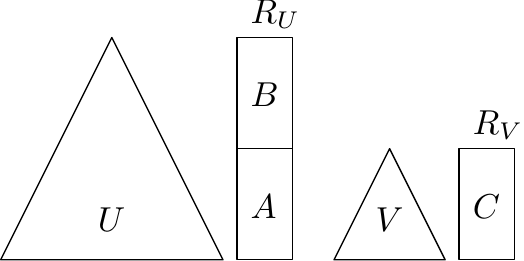}\label{fig:join_preliminaries_b}}
	\hfil
	\subfigure[]{\includegraphics[scale = 0.85]{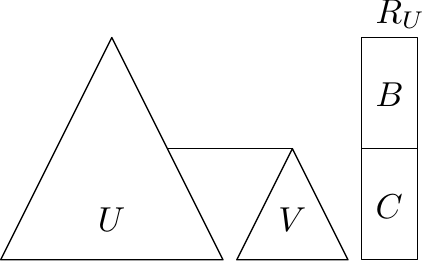}\label{fig:join_preliminaries_c}}
	\caption{Letter $A$ denotes the first $r(V)$ elements of $R_U$.
		Letter $B$ denotes the last $r(U) - r(V)$ elements of $R_U$. 
		Letter $C$ denotes the first $r(V)$ elements of $R_V$.}
	\label{fig:join_preliminaries}
\end{figure}
\iftoggle{long}{%
	Let us first consider the simple case, where each stack contains only a single element.
	Next, we extend it to the general case.
	The stack $S_U$ contains a pair $(R_U, [1, r(U)])$, and the stack $S_V$
	contains a pair $(R_V, [1, r(V)])$ after the preprocessing step. The invariants are true for $S_U$ and $S_V$.
	After the join operation we add the element of $S_V$ on the top of $S_U$.
	Consequently, the $S_U$ contains $(R_V,[1, r(V)])$ and
	$(R_U, (r(V), r(U)])$, and the invariants hold.
	
	Now we describe the general case. Suppose that the stacks $S_U$ and 
	$S_V$ contain more than one pair
	and they satisfy the invariants from above.
}{%
Suppose that the stacks $S_U$ and $S_V$ are not empty
and they satisfy the invariants.
}%
We need to find a node with rank $r(V)$ on the right spine of $U$.
First, we search for the pair $(R, I) \in S_U$, such that $r(V) \in I$. 
Let $S_U$ contains pairs with the following intervals:
$[r_i, r_{i + 1} - 1]$, where $i = 1, \dots, |S_U|, r_1 = 1, r_{|S_U| + 1} = r(U) + 1$.
We pop pairs from the stack $S_U$
until we find a pair with interval $I = [r_j, r_{j + 1} - 1]$
such that $r(V) \in I$. 
After the join operation we push the elements of $S_V$ on the top of $S_U$.
We refer to this operation as the \textit{combination} of $S_U$ and $S_V$. 
Consequently, stack $S_U$ contains all pairs of $S_V$ and pairs with intervals
$[r(V) + 1, r_{j + 1} - 1], [r_i, r_{i + 1} - 1]$, where $i = j + 1, \dots, |S_U|$.
We do not add the pair with the interval $[r(V) + 1, r_{j + 1} - 1]$ to the stack $S_U$ if
$r(V) + 1 > r_{j + 1} - 1$.

Now we show that the invariants hold for the resulting stack $S_U$.
The union of the pairs in $S_V$ is $[1, r(V)]$ and all intervals in the stack $S_V$ are disjoint and sorted.
The intervals $[r(V) + 1, r_{j + 1} - 1], [r_i, r_{i + 1} - 1]$, where $i = j + 1, \dots, |S_U|$,
are disjoint, sorted, and their union is $[r(V) + 1, r(U)]$.
Hence, Invariants~\ref{inv_1} and~\ref{inv_2} are hold.
We also have popped all pairs $(R, I) \in S_U$, such that no element of $R$
points to a node on the right spine of $U$. Hence, Invariant~\ref{inv_3} holds.
Consequently, the stack $S_U$ satisfies all the invariants after its combination with $S_V$.
See Figure~\ref{fig:sequence_of_joins}.


\begin{figure}
	\centering
	\subfigure[The sequence of trees.]{\includegraphics[scale = 1]{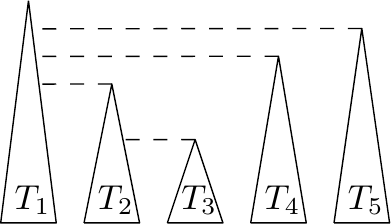}}
	\hfil
	\subfigure[The elements of the stack $S_1$ during the sequence of the join operations.]
	{\includegraphics[scale = 1]{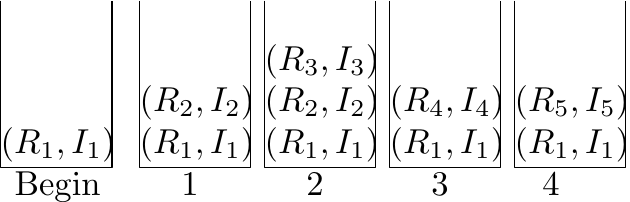}}
	\caption{We join $T_1$ with $T_2, T_3, T_4, T_5$.
		Each $T_i$ has stack $S_i$ with a pair $(R_i, I_i), i = 1 \dots 5$.
		We add $(R_2, I_2)$
		to $S_1$ during the join of $T_1$ and $T_2$.
		We do the same during the join of $T_1$ and $T_3$.
		Next, we pop $(R_3, I_3), (R_2, I_2)$ from $S_1$
		and add $(R_4, I_4)$ to it during the join of $T_1$ and $T_4$.
		We pop $(R_4, I_4)$ from $S_1$ and
		add $(R_5, I_5)$ to it during the join of $T_1$ and $T_5$.}
	\label{fig:sequence_of_joins}
\end{figure}

\paragraph{Maintaining the Invariants in Parallel}
Here we present a parallel algorithm to maintain the Invariants~\ref{inv_2} and~\ref{inv_3} 
throughout all join operations.
More precisely,
we show how to perform a sequence of pop operations on a stack and a combination of
two stacks in constant time on a CREW PRAM.
We demand that each element of the stack has a dedicated PE for this.
Initially, a stack $S$ of a tree contains only one element
and we dedicate it the PE that corresponds to this tree.
Each element $e$ of $S$ contains additional pointers: $\mathrm{Start}$, $\mathrm{StackID}$, and $\mathrm{Update}$.
$\mathrm{Start}$ points to the flag that signals to start the parallel pop operations.
$\mathrm{StackID}$ points to the unique id number of the stack containing the element $e$.
$\mathrm{Update}$ is used to update
$\mathrm{Start}$ and $\mathrm{StackID}$ pointers in new elements of the combination of two stacks.
Each element of a stack has the same $\mathrm{Start}$, $\mathrm{Update}$
and $\mathrm{StackID}$. We refer to this condition as the \textit{stack invariant}.
Additionally, we maintain a global array $\mathrm{UpdateData}$ of size $p$.

Here we discuss how to combine stacks $S_U$ and $S_V$ and maintain the invariants in constant time.
The combination can be done 
in constant time, since the stacks are implemented as linked lists.
Next, we repair the stack invariant; we set the values of the pointers 
$\mathrm{Start}$, $\mathrm{Update}$, $\mathrm{StackID}$ in the old elements of $S_U$
to the values of the pointers in the new elements of $S_U$.
The PE dedicated to tree $U$ starts the repair using
Algorithm~\ref{code:combine_start_repair}. Each PE dedicated to the element $s \in S_U$
that was in $S_V$ permanently performs Algorithm~\ref{code:combine_repair} until
it updates the variables $\mathrm{Start}, \mathrm{Update}, \mathrm{StackID}$ in $s$.
Finally, we wait a constant amount of time until
each PE finishes%
\iftoggle{long}{%
	~updating the pointers of its corresponding element.%
}{%
.
}

\begin{algorithm}[h]
	\caption{Starts the repair of the stack invariant of $S_U$ after combination of $S_U$ and $S_V$.}
	\SetAlgoNoLine
	\KwIn{old element $s \in S_U$, new element $s' \in S_U$}
	\SetKwFunction{algo}{Start\_Repair\_Stack\_Inv}
	\SetKwProg{myalg}{Function}{}{}
	\myalg{\algo{$s$, $s'$}} {
		$\mathrm{UpdateData[s'.StackID].Start} = \mathrm{s.Start}$\;
		$\mathrm{UpdateData[s'.StackID].Update} = \mathrm{s.Update}$\;
		$\mathrm{UpdateData[s'.StackID].StackID} = \mathrm{s.StackID}$\;
		Set flag pointed to by $\mathrm{s'.Update}$\;
	}
	\label{code:combine_start_repair}
\end{algorithm}
\vspace{0.5cm}
\begin{algorithm}[h]
	\caption{Updates $\mathrm{Start}, \mathrm{Update}, \mathrm{StackID}$
		in the new elements of $S_U$.}
	\SetAlgoNoLine
	\KwIn{new element $s' \in S_U$}
	\SetKwFunction{algo}{Repair\_Stack\_Inv}
	\SetKwProg{myalg}{Function}{}{}
	\myalg{\algo{$\mathrm{s'}$}} {
		\If{$\mathrm{flag\ pointed\ to\ by\ s'.Update\ is\ set}$} {	
			$\mathrm{s'.Start} = \mathrm{UpdateData[s'.StackID].Start}$\;
			$\mathrm{s'.Update} = \mathrm{UpdateData[s'.StackID].Update}$\;
			$\mathrm{s'.StackID} = \mathrm{UpdateData[s'.StackID].StackID}$\;
		}
	}
	\label{code:combine_repair}
\end{algorithm}

Now we present a parallel algorithm that performs a sequence of pop operations on the stack $S_U$.
We can perform any number of the pop operations on a
stack in constant time in parallel,
since each element of the stack has a dedicated PE. Recall that we want to find an element $(R, I)$
in $S_U$ such that $r(V) \in I$.
The PE dedicated to the tree $U$ sets the 
flag pointed to by $\mathrm{Start}$ to start a sequence of pop operations.
Each PE that is dedicated to an element $s = (R, I) \in S_U$
permanently performs the function \texttt{Pop(s)}.
This function checks the flag pointed to by $\mathrm{Start}$; it exits if
the flag is unset.
Otherwise, the function checks if $r(V) \in I$;
if $r(V)$ is to the right of $I$ on the integer axis then we mark the element $s$ as deleted.
Next, we perform another parallel test.
Each PE -- if its corresponding element
is not marked deleted -- checks if the next element of the stack is deleted. 
There is only one element that is not marked deleted, but the next element is marked deleted, since
the intervals of elements of $S_U$ are sorted according to Invariant~\ref{inv_1}.
This element is a pair $(R, I)$ such that $r(V) \in I$. 
Next, we make a parallel deletion of the marked elements
and wait a constant amount of time until each PE finishes.

\subsubsection{Splitting a Sequence of Degree~$b$ Nodes in Parallel}
\label{subsubsec:split_b_degree_seq}
Here we present a modified join operation of two trees $U$ and $V$,
where $r(U) \ge r(V)$ (the case  $r(U) < r(V)$ is similar), that performs all splits
in constant time in parallel. We demand the following preconditions:
\nottoggle{long}{%
	(1) we join a sequence of preprocessed trees; that is,
	there are no nodes of degree $b$ on the right spine in the beginning;
	(2) we know a pointer to the node $n \in U$ where $r(V) = r(n)$.
}{%
\begin{enumerate}
	\item We join a sequence of preprocessed trees; that is,
	there are no nodes of degree $b$ on the right spine in the beginning.
	
	\item We know a pointer to the node $n \in U$ where $r(V) = r(n)$.
\end{enumerate}
}%

The algorithm first works as the basic join operation from Section~\ref{sec:preliminaries}.
We can perform a sequence of splits of degree-$b$ nodes in parallel using the fact that there is a dedicated PE
to each node with degree $b$.
This is a case because we assign the PE previously responsible for handling 
tree $V$ to a node $v \in U$ after the join operation, if the degree of $v$ becomes equal to $b$.

\iftoggle{long}{%
	Let us prove that each join operation can
	increase the length of a sequence of degree-$b$ nodes by at most one.
	This fact allows us to assign a dedicated PE to each degree~$b$ node.
}{}%
\begin{lemma} \label{lemma:path_growth}
	A join operation can be implemented so that sequences of degree-$b$ nodes
	grow by at most one element.
\end{lemma}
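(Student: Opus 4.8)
The plan is to fix a concrete implementation of the join and then track, level by level, how the set of degree-$b$ nodes on the two spines changes. Assume $r(U)\ge r(V)$, so $V$ is attached at level $r(V)$ of the right spine of $U$ and the left spine of $U$ is untouched (unless a new root is created, see below); the case $r(U)<r(V)$ is symmetric. We use the basic join of Section~\ref{sec:preliminaries} — descend to the node $n$ on the right spine of $U$ with $r(n)=r(V)$, fuse it with the root of $V$ if either has degree $<a$, and otherwise make the root of $V$ the new rightmost child of $p(n)$, then propagate splits up the right spine as an insertion would — augmented by one extra \emph{bridging} rule: whenever a right-spine node receives a pointer (from the fuse, or from a split of its spine-child) and thereby reaches degree exactly $b$ while its spine-child below it also has degree $b$, we split that node too (into halves of degree $\lceil b/2\rceil$ and $\lfloor b/2\rfloor$, both $<b$ since $b\ge 2a$) and carry the insertion one level higher. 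Call the whole cascade — the overflow splits, together with at most one bridging split — the \emph{propagation}.

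The first step is a degree bookkeeping showing that the propagation never creates a degree-$b$ node \emph{by splitting}. An overflow split acts on a node that had degree $b$, received a pointer, so had degree $b+1$, and $\lceil(b+1)/2\rceil<b$; a bridging split acts on a node of degree exactly $b$ and $\lceil b/2\rceil<b$; and in the fuse-then-split case at level $r(V)$ one checks from $b\ge 2a$ (and from the fact that we only fuse when one of the two nodes has degree $<a$) that the fused degree is at most $3b/2-1$, so both halves again have degree $<b$. (The corner case $a=1$, $b=2$ is checked separately and is no harder.) Hence the only node whose degree can rise to $b$ during the propagation is the node $w$ at which it stops — the first node above the cascade that does not overflow, or $n$ itself in the fuse-and-stop case — and only if $w$ had degree $b-1$ beforehand.

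The second step locates $w$ relative to the pre-existing runs. Going down from $w$ along the new right spine: in the fuse-and-stop case $w=n$ and its spine-child is the top of $V$'s right spine; in every case where the propagation actually passed through $w$'s spine-child, that child is now a freshly split node of degree $<b$; and in the no-fuse case where $w=p(n)$, its new spine-child is the root of $V$ — but if that root had degree $b$ the bridging rule would have forced a split at $w$, contradicting that the propagation stopped at $w$. So in all cases the spine-child of $w$ has degree $<b$, hence promoting $w$ to degree $b$ cannot splice two runs together: it merely appends $w$ to the degree-$b$ run immediately above $w$, or starts a new run of length one. Moreover, note that in the no-fuse case $n$ itself leaves the right spine, and in every case the part of the new right spine below the junction is (a suffix of) the right spine of $V$, unchanged; and the nodes that overflowed during the propagation formed a single maximal contiguous run of degree-$b$ nodes on $U$'s old right spine, all of which disappear. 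Every other run, and the entire left spine, is unchanged. The remaining possibility is that the propagation reaches the root of $U$, or $r(U)=r(V)$, so a new root is created: it has degree $2<b$, or degree $\le b$ in the sub-case where the two roots are fused (same arithmetic as above), so on each spine it is at worst appended to the existing top run. In all cases no run grows by more than one element.

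The crux — and the reason the lemma is phrased as ``can be implemented so that'' — is precisely the splicing danger neutralized by the bridging rule: without it, inserting the root of $V$ into a degree-$(b-1)$ node $p(n)$ can turn $p(n)$ into a degree-$b$ node wedged between the top run of $V$'s right spine and a run of $U$'s right spine, merging the two into something far longer than either. The bridging rule splits such a node instead, and the two facts that make this work were established above: the extra split fires at most once per join, because the half it leaves on the spine has degree $<b$ and so the rule cannot retrigger one level up, and it does not itself produce a degree-$b$ node. Everything else is the routine $(a,b)$-tree degree arithmetic, which I expect to be the only genuinely fiddly part.
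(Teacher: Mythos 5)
Your proof is correct, but it fixes the implementation differently from the paper. The paper's (long-version) proof prevents the splicing of two degree-$b$ runs by two separate measures: it pre-splits the root of $V$ whenever that root has degree $b$, increasing $r(V)$ by one before the join starts (this handles the no-fuse case, where the root of $V$ becomes a child of a degree-$(b-1)$ node $p(n')$), and, in the fuse case, it splits the fused node $n'$ whenever $n'$ ends up with degree $b$ while $p(n')$ already lies in a degree-$b$ run, cascading the resulting overflow through that run in parallel. Your single bridging rule --- split any node that reaches degree exactly $b$ while its spine-child has degree $b$ --- covers both situations with one mechanism and is triggered by the state of the child rather than of the parent; either trigger suffices, since a splice requires a degree-$b$ neighbour on both sides of the new node. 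What your version buys is a cleaner invariant (per join, at most one node newly attains degree $b$, namely the node $w$ where the propagation stops, and its spine-child provably has degree $<b$, so $w$ can only be appended to the run above it); this dovetails with the MetaData-inheritance step of the PE-assignment scheme, where a new degree-$b$ node copies its parent's MetaData pointer, so a child-side splice must indeed be impossible. What the paper's version buys is that the join at the junction stays closer to the textbook algorithm, with the extra split confined to the root of $V$ and to the fuse case. One point worth making explicit in your write-up: after your bridging split at $p(n)$, the degree-$b$ root of $V$ remains on the new right spine as the head of its old run; that run does not grow, and that node already carries a dedicated PE from the earlier join in which it became degree $b$, which is what the subsequent PE-assignment argument needs.
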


\iftoggle{long}{%
	\begin{proof}    
		First, we analyze the case when a degree-$b$ node appears during a join operation.
		Let node $n$ has degree $b - 1$ and $p(n)$ is in the sequence of degree-$b$ nodes.
		Figure~\ref{fig:grow_sec_a} shows that if the rightmost child of $n$ splits 
		then we insert a splitter key into $n$ and its degree is equal to $b$ now.
		We show that the sequence of degree-$b$ nodes (where $p(n)$ is)
		grows only by an one node by proving the following fact:
		the new child of $n$ has degree less than $b$ after the join
		operation. If the rank of the new child is greater than $r(V)$ 
		then it has degree less than $b$, since it has been split.
		The case when the rank of the new child equal to $r(V)$ we further analyze.
		
		Consider two sequences of degree-$b$ nodes in $U$ and $V$ and a
		node $n' \in U$ on the right spine such that $r(n') = r(V)$.
		We show that they can not be combined into one sequence after the join operation. 
		We consider only the case when the sequence in $V$ contains its root.
		Otherwise it is obvious that the sequences will not be combined. 
		
		Consider the case when $p(p(n'))$ is in a sequence of degree-$b$ nodes and
		$p(n')$ has degree $b - 1$.
		If the fuse of $n'$ and the root of $V$ does not occur then
		the root of $V$ will be a new child of $p(n')$. See Figure~\ref{fig:grow_sec_b}.
		The length of the sequence
		will increase by one, since the degree of $p(n')$ will be $b$.
		If the root
		of $V$ has degree $b$ then the sequences of degree-$b$ nodes will be combined.
		Hence, our goal to ensure that the degree of the root of $V$ remains less than $b$.
		Thus, we split the root of the tree $V$ if it has degree $b$ before the join operation
		and increase the height of $V$ by one.
		
		Consider the case when $p(n')$ is in a sequence of degree-$b$ nodes.
		The fuse of the root of $V$ and $n'$ occurs when 
		at least one of them
		has degree less than $a$. We fuse them into $n'$ that may result
		in $n'$ having degree $b$. See Figure~\ref{fig:grow_sec_c}.
		Consequently, $n'$ has degree $b$ as well as its
		parent and the sequences will be combined. 
		Then we split $n'$ and insert the splitter key in $p(n')$.
		This split prevents the combination of two sequences.
		We also split the nodes in the sequence of degree-$b$ nodes where
		$p(n')$ is, since the degree of $p(n')$ is $b + 1$.
		We do this in parallel in constant time as further described.
	\end{proof}
	
	\begin{figure}[t]
		\centering
		\subfigure[]{\includegraphics[scale = 0.8]{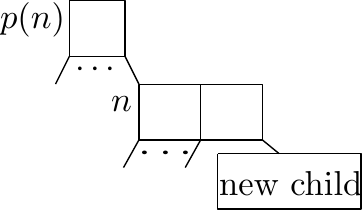}\label{fig:grow_sec_a}}
		\hfil
		\subfigure[]{\includegraphics[scale = 0.8]{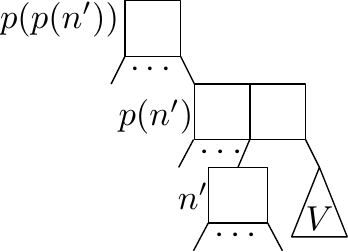}\label{fig:grow_sec_b}}
		\hfil
		\subfigure[]{\includegraphics[scale = 0.8]{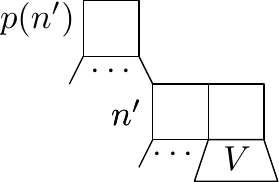}\label{fig:grow_sec_c}}
		\caption{The growth of a sequence of degree-$b$ nodes.}
		\label{fig:grow_sec}
	\end{figure}
}{}%

\paragraph{Assigning PEs} 
We now explain how to assign a PE to a new degree-$b$ node $n$. 
Suppose that $n$ has extended some sequence of degree-$b$ nodes according to 
Lemma~\ref{lemma:path_growth}. We assign the freed PE of the tree $V$ to $n$.
Hence, this PE can split $n$ during some following join operation.

Now let us discuss how we assign a PE to $n$ in more detail.
We extend each node by an additional pointer $\mathrm{MetaData}$ to a special data structure:
a flag $\mathrm{Start}$ and an integer $\mathrm{Rank}$.
The $\mathrm{Start}$ flag signals to all PEs
assigned to the nodes of the same sequence of degree-$b$ nodes to start the parallel split of
this sequence. 
We assign the pointer $\mathrm{MetaData}$ in $p(n)$ to the pointer $\mathrm{MetaData}$ in $n$.
Finally, we command to the PE that is dedicated to $n$ to perform
permanently the function \texttt{Split\_B\_Node($n$)}, which is described
further.

\paragraph{Splitting a Sequence of Degree-$b$ Nodes}
The PE dedicated to $U$ assigns $r(V)$ to $\mathrm{Rank}$ and next
sets $\mathrm{Start}$ when we need to perform a split of nodes of a sequence
of degree-$b$ nodes in parallel. 
Each PE dedicated to a node $n$
in this sequence permanently performs the function \texttt{Split\_B\_Node($n$)}. 
This function checks the flag $\mathrm{Start}$; it exits if the flag is unset.
Otherwise, the function starts splitting $n$ if $r(n) \ge \mathrm{Rank}$.

Let us consider the parallel split of the sequence of degree-$b$ nodes more precisely.
Each PE splits its corresponding node $y$ by creating a new node $x$ and coping the first $\lfloor\frac{b}{2}\rfloor - 1$
keys from $y$ to $x$. The last $\lceil\frac{b}{2}\rceil - 1$ keys remain in $y$.
Next, the PE waits until $p(y)$ is split and then inserts the splitter key
and the pointer to $x$ into $p(y)$.
Next, we 
wait
until each PE finishes%
\iftoggle{long}{%
	.
}{%
~its corresponding task.
}%
This takes constant time on a PRAM.
\iftoggle{long}{%
	See Figure~\ref{fig:split_sequence}.
}{}%
It is
crucial that all nodes with the degree $b$ sequence
are still the parents of their rightmost children after the split,
because we know only the
parent pointer for each child (we store a parent pointer in each node). 
Note that all the nodes which were on the right spine before the split step remain
on the right spine after it. This property is crucial to access spine nodes in constant
time.

\iftoggle{long}{%
	\begin{figure}[t]
		\centering
		\subfigure[]{\includegraphics[scale = 0.8]{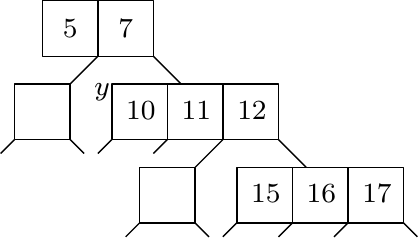}\label{fig:split_sequence_a}}
		\hfil
		\subfigure[]{\includegraphics[scale = 0.8]{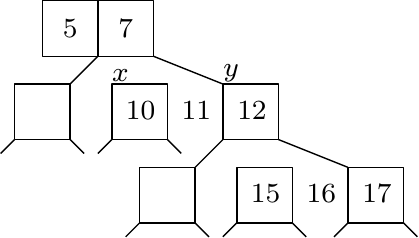}\label{fig:split_sequence_b}}
		\hfil
		\subfigure[]{\includegraphics[scale = 0.8]{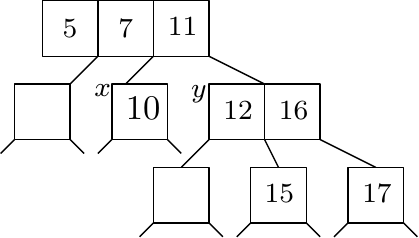}\label{fig:split_sequence_c}}
		\caption{We split a sequence of the two $4$-degree nodes in $(2, 4)$-tree \protect\subref{fig:split_sequence_a}.
			First, we split each of the nodes in parallel \protect\subref{fig:split_sequence_b}.
			The nodes with keys $7$ and $12$ remain parents of their rightmost child.
			Next, we insert the splitter keys $11$, $16$ and pointers to the nodes with keys $10$ and $15$
			into the parent nodes \protect\subref{fig:split_sequence_c}.}
		\label{fig:split_sequence}
	\end{figure}
}{}%

\subsubsection{Maintaining Subtree Sizes}

\label{subsubsec:subtree_sizes}
Here we present a parallel algorithm to update the size of each subtree in a tree $T$, where $T$ is the result
of joining $T_1, \cdots, T_k$.
First, we save all the nodes where
the joins of two trees occurred during the join operation of $k$ trees.
Suppose we join two trees $U$ and $V$, such that
$r(U) \ge r(V)$. We save a node $n \in U$ on the right spine that has rank $r(n) = r(V)$. Next,
we dedicate the PE that was previously dedicated to $V$ to node $n$.
Consequently, each of these nodes has a unique dedicated PE.

Now suppose we have finished joining $k$ trees. 
Consider $k - 1$ saved nodes and the PEs dedicated to them.
The PE that made the last join operation of two trees sets the global
flag $\mathrm{Join\_Done}$ and performs the function $\texttt{Update\_Subtree\_Sizes}$.
Other PEs permanently perform the function $\texttt{Update\_Subtrees\_Sizes}$ as well.
This function checks the flag $\mathrm{Join\_Done}$; it exits if the flag is unset.
Otherwise, it updates the subtree sizes of $T$ as follows: each PE follows up the path from 
corresponding saved node to the root of $T$ and updates the subtree sizes.
The algorithm works in $\mathcal{O}(\log m)$ time on a CREW PRAM.


\subsubsection{The Parallel Join Algorithm}
\label{subsubsec:join_algorithm}
Now we have presented the necessary subroutines and can use them to construct 
the parallel join algorithm.

\begin{lemma} \label{lemma:const_cost_par}
	We can join trees $U$ and $V$, where $r(U) \ge r(V)$ (the case
	$r(U) < r(V)$ is similar), in constant time.
\end{lemma}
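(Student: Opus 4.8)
The plan is to show that, once all preprocessing (building the spine-rank arrays, eliminating degree-$b$ nodes from the spines, dedicating a PE to each degree-$b$ node as it is created) is in place, a single join of $U$ and $V$ with $r(U) \ge r(V)$ can be carried out in $\Oh{1}$ parallel steps on a CREW PRAM. I would assemble the lemma out of the three subroutines already developed: (i) constant-time retrieval of the spine node $n \in U$ with $r(n) = r(V)$ via the stack $S_U$ and the parallel pop/combination operations of Section~\ref{subsubsec:node_retrieve_par}; (ii) the constant-time parallel splitting of a sequence of degree-$b$ nodes from Section~\ref{subsubsec:split_b_degree_seq}, justified by Lemma~\ref{lemma:path_growth}; and (iii) the constant-time bookkeeping for PE reassignment and for recording the join node (to be used later for subtree sizes in Section~\ref{subsubsec:subtree_sizes}).

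The key steps, in order, are as follows. First, use the PE dedicated to $U$ to look up $n$ with $r(n)=r(V)$ in $S_U$: set the $\mathrm{Start}$ flag, let every PE dedicated to an element of $S_U$ run \texttt{Pop} once, identify in one more parallel test the unique surviving pair $(R,I)$ with $r(V)\in I$, and delete the marked elements; this is $\Oh{1}$ by the argument already given. Second, perform the combinatorial core of the basic join from Section~\ref{sec:preliminaries}: choose the largest key of $U$ as splitter, fuse the root of $V$ into $n$ if either has degree $<a$, and otherwise (or after the fuse, if $n$ now exceeds $b$) split $n$ and insert the splitter key and pointer into $p(n)$ — all a constant number of pointer updates. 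Third, propagate the resulting cascade of splits: by Lemma~\ref{lemma:path_growth} the join extends each sequence of degree-$b$ nodes on the right spine by at most one element, and a PE is already dedicated to every node of that sequence, so the PE dedicated to $U$ sets $\mathrm{Rank} \leftarrow r(V)$ and raises $\mathrm{Start}$, triggering the constant-time parallel split of the whole sequence as in Section~\ref{subsubsec:split_b_degree_seq}; the invariant that every spine node stays the parent of its rightmost child and remains on the right spine guarantees this is correct. Fourth, in $\Oh{1}$ time repair the stack invariant of $S_U$ by combining it with $S_V$ (pushing $S_V$'s elements and running Algorithms~\ref{code:combine_start_repair}--\ref{code:combine_repair}), reassign the freed PE of $V$ to the node $n$ (updating its $\mathrm{MetaData}$ pointer, propagated from $p(n)$ exactly as for a new degree-$b$ node), and record $n$ as a join node for later subtree-size updates. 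A barrier of constant cost then completes the operation.

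The main obstacle is arguing that the several parallel sub-phases genuinely compose to $\Oh{1}$ total time without hidden dependencies: each phase (pop sequence, degree-$b$ split, stack repair) is internally constant-time only because every relevant object has its own dedicated PE and the phases are separated by waits of constant length, and one must check that the split cascade of step~3 does not in turn create new degree-$b$ nodes outside the single extended sequence — precisely what Lemma~\ref{lemma:path_growth} rules out — and that the PE reassignments of different concurrent joins (in the eventual parallel join of $k$ trees) never contend for the same PE. I would handle the latter by noting that at the moment of a join the PE of $V$ is idle and is assigned to exactly one node, so the mapping from PEs to active tasks stays a bijection up to the constant multiplicity remarked on in Section~\ref{subsec:non_opt_join}. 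The remaining details — that fusing or splitting $n$ touches only $\Oh{1}$ keys, that the $\mathrm{Start}$/$\mathrm{StackID}$/$\mathrm{Update}$ stack-invariant survives the combination, and that all CREW concurrent reads are to distinct or read-only locations — are routine given the machinery already set up.
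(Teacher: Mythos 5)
Your proposal is correct and follows essentially the same route as the paper's own proof, which likewise assembles the lemma from constant-time spine-node retrieval (Section~\ref{subsubsec:node_retrieve_par}), the fuse/insert step of the basic join, and the constant-time parallel splitting of degree-$b$ sequences (Section~\ref{subsubsec:split_b_degree_seq}). The additional bookkeeping you spell out (stack combination and repair, PE reassignment, recording the join node) matches what the paper delegates to the surrounding subsections rather than to the proof itself.
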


\iftoggle{long}{%
	\begin{proof}
		We retrieve a node $u$ with rank $r(V)$ on the right spine of $U$ in
		constant time as described in Section~\ref{subsubsec:node_retrieve_par}.
		Next, we insert the root of $V$ as the rightmost child
		of $p(u)$ or fuse it with $u$.
		Finally, we perform the splits of nodes with degree $\geq b$
		in constant time as described in~\ref{subsubsec:split_b_degree_seq}.
		Therefore, we join $U$ and $V$ in constant time.
	\end{proof}
}{}%

We have $T_1, \dots, T_k$ trees and each of these trees has its own dedicated PE. 
Each PE $i$, where $i$ is odd, performs the modified join operation of $T_i$ and $T_{i + 1}$
in constant time according to Lemma~\ref{lemma:const_cost_par}.
If $r(T_i) \ge r(T_{i + 1})$ then PE $i + 1$ is freed
after this join operation, otherwise PE $i$ is freed.
Now the freed PE performs Algorithm~\ref{code:main}
and the other PE performs the next join operation.
Finally, each PE deletes the corresponding stack, left and right spines
when all trees are joined. Therefore, Lemma~\ref{lemma:nonoptimal_parallel_join} holds.

\iftoggle{long}{%
	\begin{algorithm}[h]
		\caption{Main function, which is performed by all freed PEs}
		\SetAlgoNoLine
		\KwIn{a node $n$ of degree $b$, a stack element $s$}
		\SetKwFunction{algo}{Main}
		\SetKwFunction{Split}{Split\_B\_Node}
		\SetKwFunction{Repair}{Repair\_Stack\_Invariant}
		\SetKwFunction{Pop}{Pop}
		\SetKwFunction{UpdateSizes}{Update\_Subtree\_Sizes}
		\SetKwProg{myalg}{Function}{}{}
		\myalg{\algo{$\mathrm{n}, \mathrm{s}$}} {
			\Split{n}\;
			\Repair{s}\;
			\Pop{s}\;
			\UpdateSizes{}\;
		}
		\label{code:main}
	\end{algorithm}
	\setlength{\intextsep}{0pt} 
	\setlength{\textfloatsep}{0pt}
}{%
\begin{algorithm}[h]
	\caption{Main function, which is performed by all freed PEs}
	\SetAlgoNoLine
	\KwIn{a node $n$ of degree $b$, a stack element $s$}
	\SetKwFunction{algo}{Main}
	\SetKwFunction{Split}{Split\_B\_Node}\SetKwFunction{Pop}{Pop}
	\SetKwFunction{Repair}{Repair\_Stack\_Invariant}
	\SetKwFunction{Pop}{Pop}
	\SetKwFunction{UpdateSizes}{Update\_Subtree\_Sizes}
	\SetKwProg{myalg}{Function}{}{}
	\myalg{\algo{$\mathrm{n}, \mathrm{s}$}} {
		\Split{n}\;
		\Repair{s}\;
		\Pop{s}\;
		\UpdateSizes{}\;
	}
	\label{code:main}
\end{algorithm}
\setlength{\intextsep}{0pt} 
\setlength{\textfloatsep}{0pt}
}%

\subsection{Sequential join of $t$ trees}\label{subsec:seq_join_trees}
We present a sequential algorithm to join $t$ preprocessed trees in $\Ohsmall{t}$ time.
This algorithm joins trees in pairs.
During a join operation
we, first, access a spine node by its rank;
next, we connect the trees and split nodes of degree~$b$.
Both operations can be done in amortized constant time.

\paragraph{Fast Access to Spine Nodes by Rank} \label{subsec:seq_access_node}
Consider joining $U$ and $V$, where $r(U) \ge r(V)$ (the case
$r(U) < r(V)$ is similar).
We use the same idea as in Section~\ref{subsubsec:node_retrieve_par}
to retrieve a spine node with rank $r(V)$ in $U$.
We maintain a stack with arrays of the pointers to the nodes on the right and left spines for each tree
that we have built during the preprocessing step. Initially, each stack contains one element.
Each tree and its corresponding stack satisfy the invariants from Section~\ref{subsubsec:node_retrieve_par}%
\iftoggle{long}{%
	that guarantees that the pointers of the arrays in the stack point to the nodes on the right (left) spine.
}{%
.
}%
To maintain the invariants we, first, perform a sequence of pop operations on the stack of $U$ to retrieve a spine node; next, we combine stacks of $U$ and $V$.
See details in Section~\ref{subsubsec:node_retrieve_par}.

We combine two stacks in worst-case constant time, since each stack
is represented using a linked list. 
Each pop operation removes an element that was in the stack as a result of the
previous combine operation.
Therefore, we can charge the cost of the sequence of $s$ pop operations
to the previous $s$ combine operations.
Hence, the sequence of $s$ pop operations takes amortized
constant time.
See the detailed proof of the amortized constant cost of the sequence of $s$ pop operations
in~\cite[Chapter~17: Amortized Analysis,~p. 460 -- 461]{Cormen:2009:IAT:1614191}.

\begin{lemma} \label{lemma:seq_access_node}
	Consider joining of two trees.
	We can access a spine node by its rank in amortized constant time.
\end{lemma}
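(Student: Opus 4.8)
The plan is to prove the amortized bound by an accounting argument over the sequence of joins carried out by the sequential multi-way join; I fix attention on the right-spine stack of the taller tree, the left spine and the case where $V$ is taller being symmetric. Recall from Section~\ref{subsubsec:node_retrieve_par} that each live tree $T$ keeps a stack $S_T$ of pairs $(R,I)$ obeying Invariants~\ref{inv_1}--\ref{inv_3}, so that accessing the right-spine node of a given rank $\rho$ amounts to (i) popping pairs off the top of $S_T$ until the pair $(R,I)$ with $\rho\in I$ is exposed, then (ii) returning $R[\rho]$. After joining $U$ and $V$ with $r(U)\ge r(V)$, the surviving stack is repaired by \emph{combine}: truncate the exposed interval $[r_j,r_{j+1}-1]$ of the old $S_U$ to $[r(V)+1,r_{j+1}-1]$ (dropping it if now empty) and splice the linked list $S_V$ on top. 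Step~(ii), the truncation, and the list splice are each worst-case $O(1)$, so the only cost that is not trivially constant is the pop sequence of step~(i), and the task reduces to amortizing those.

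First I would introduce the potential $\Phi$ equal to the total number of pairs over all currently live stacks. After preprocessing, $\Phi$ equals the number of input trees, since each gets a one-pair stack, and no pair is ever created afterwards except, possibly, the single truncated interval made by a combine; hence a combine raises $\Phi$ by at most one (splicing $S_V$ onto $S_U$ is potential-neutral, as those pairs are merely relabelled from $S_V$ to $S_U$ while $V$ dies), so its amortized cost is $O(1)$. A pop sequence that removes $q$ pairs costs $\Theta(q+1)$ and lowers $\Phi$ by $q$, so its amortized cost is $O(1)$ too; since $\Phi$ is nonnegative, an access --- one pop sequence followed by one $O(1)$ lookup --- has amortized cost $O(1)$. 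Stated aggregately: every pair that is ever popped was placed on some stack either at preprocessing or by an earlier combine, each combine adds $O(1)$ pairs, so across $t-1$ joins the total pop count is $O(t)$, exactly the \textsc{Multipop}-stack aggregate bound~\cite[Ch.~17]{Cormen:2009:IAT:1614191}.

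I would then verify that this bookkeeping preserves Invariants~\ref{inv_1}--\ref{inv_3}, which is what makes the $O(1)$ lookup in step~(ii) correct in the first place. The pairs popped in step~(i) are precisely those whose intervals lie entirely below $r(V)$, i.e.\ whose arrays no longer describe spine nodes after the join, so discarding them loses nothing --- this is the Invariant~\ref{inv_3} argument of Section~\ref{subsubsec:node_retrieve_par}; and after the splice the intervals of $S_U$ are those of $S_V$ (covering $[1,r(V)]$), then $[r(V)+1,r_{j+1}-1]$, then the untouched upper intervals, which are disjoint, sorted, and cover $[1,r(U)]$, so Invariants~\ref{inv_1} and~\ref{inv_2} persist.

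The only delicate point --- and the one I would write out most carefully --- is keeping the amortization non-circular: a combine is simultaneously the place where previously migrated pairs accumulate and the source of work for future pops, so a naive ``charge each pop to a combine'' risks double-counting. Fixing $\Phi$ as the count of currently live pairs removes the danger, because the migration of $S_V$'s pairs is then potential-neutral and only genuine pops spend potential; everything else is routine.
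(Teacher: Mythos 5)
Your proof is correct and takes essentially the same route as the paper: both reduce the access cost to the multipop-stack amortization of~\cite[Ch.~17]{Cormen:2009:IAT:1614191}, observing that a combine is a worst-case $O(1)$ linked-list splice that introduces $O(1)$ new pairs, while each pop removes a pair placed there by preprocessing or an earlier combine. Your potential $\Phi$ (the number of live pairs) is just the standard potential-method rephrasing of the paper's direct charging of each pop to the combine that supplied the popped element, so no substantive difference remains.
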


\begin{lemma} \label{lemma:seq_split_b_nodes}
	Consider joining of $t$ preprocessed trees $T_1, \dots, T_t$.
	We split $\Ohsmall{t}$ degree-$b$ nodes over all operations.
\end{lemma}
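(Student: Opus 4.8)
The plan is to prove the slightly stronger statement that the algorithm performs at most $t-1$ splits of degree-$b$ nodes during the joins of $T_1,\dots,T_t$, by charging each split to the join operation in which the split node became a degree-$b$ node.

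First I would record two structural facts. (i) Only nodes on a left or right spine of some current tree are ever split: the basic join, and hence its modified version, propagates splits only up a spine, and a node that is interior to some $T_i$ stays interior in every tree that contains $T_i$, since a join only attaches material on one side and never turns an interior node into the rightmost (or leftmost) node on its level. (ii) Since the $T_i$ are preprocessed, no spine node has degree $b$ at the start, so every spine node that is ever split must have been turned into a degree-$b$ node by one of the $t-1$ join operations; and after a degree-$b$ node is split, both resulting nodes have degree at most $b-1<b$, so it cannot be split again until some later join raises its degree back to $b$. Hence it suffices to bound by $\Ohsmall{t}$ the number of events in which a spine node becomes a degree-$b$ node over all joins.

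Next I would argue that a single join of $U$ and $V$ (say $r(U)\ge r(V)$; the other case is symmetric) causes at most one such event. The join first performs the connect step -- inserting the root of $V$ as the rightmost child of $p(u)$, or fusing it with $u$ -- and then a cascade of splits propagates up the right spine. If the connect step already leaves the affected node with degree exactly $b$, that is the single new degree-$b$ node and no cascade occurs. Otherwise the affected node has degree $b+1$, and the cascade splits a contiguous run of nodes each of which had degree $b$ (the bottom one $b+1$) just before the join; every such split drops that node below degree $b$, and by Lemma~\ref{lemma:path_growth} the only node whose degree can newly reach $b$ is the one just above the top of that run. In either case the join is responsible for at most one freshly created degree-$b$ spine node -- which is exactly why in the parallel version the freed processor of $V$ can be dedicated to it.

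Putting this together: over the $t-1$ joins there are at most $t-1$ events in which a spine node becomes a degree-$b$ node, and each split is preceded by a distinct such event (the split node had to reach degree $b$ at some earlier point and, once split, has degree below $b$), so the total number of degree-$b$ node splits is at most $t-1=\Ohsmall{t}$. Equivalently, the potential $\Phi$ equal to the number of degree-$b$ nodes currently on a spine starts at $0$, never drops below $0$, increases by at most one per join, and decreases by one with every degree-$b$ split, so the number of splits is bounded by the total increase of $\Phi$, which is $\Ohsmall{t}$. I expect the only delicate point to be the cascade bookkeeping in the third step -- checking that splitting the interior of a run of degree-$b$ nodes creates no new degree-$b$ node besides the single one at the top allowed by Lemma~\ref{lemma:path_growth}, and that the fuse variant and the two spine orientations hide no additional source; this is precisely the content of Lemma~\ref{lemma:path_growth}, so the proof reduces to invoking it cleanly.
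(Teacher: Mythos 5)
Your proof is correct and matches the paper's argument: the paper also uses a potential function equal to the number of degree-$b$ nodes on the spines, observes that each join increases it by at most one (Lemma~\ref{lemma:path_growth}) while each split decreases it, and uses preprocessing to get $\phi_1=0$, yielding $\Ohsmall{t}$ total splits. Your charging formulation and the explicit potential restatement at the end are the same argument in two equivalent guises.
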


\begin{proof}
	We use the potential
	method~\cite{Cormen:2009:IAT:1614191} to prove this fact.
	method~\cite{Cormen:2009:IAT:1614191}
	We define the potential function
	$\phi$ as the total number of the nodes of degree~$b$ on the left and rights spines of trees to be 
	joined.
	Suppose that the sequential join algorithm has joined trees $T_1, \dots, T_{i - 1}$. We denote the 
	result
	as $T$ and let $\phi(T) = \phi_{i - 1}$.
	Then the amortized number of splits that occurred during the join operation
	of $T$ and $T_i$ is $\hat{c_i} = c_i + \phi_i - \phi_{i - 1}$,
	where $c_i$ is the actual number of occurred splits.
	Note that $\phi_{i - 1} + 1 \ge \phi_i$ and $c_i \le |\phi_i - \phi_{i - 1}|$,
	therefore $\hat{c_i} = \Oh{1}$ and
	$\sum_{i = 2}^{t} \hat{c_i} = \sum_{i = 2}^{t} c_i + \phi_t - \phi_1$.
	Initially, the trees are preprocessed and do not contain nodes of degree~$b$ on the spines,
	hence $\phi_1 = 0$
	and $\sum_{i = 2}^{t} c_i \le \sum_{i = 2}^{t} \hat{c_i} = \Ohsmall{t}$.
\end{proof}

\begin{lemma} \label{lemma:seq_join_trees}
	We can join $t$ trees in $\Ohsmall{t}$ time.
\end{lemma}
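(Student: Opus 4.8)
The plan is to assemble Lemma~\ref{lemma:seq_join_trees} directly from the two preceding lemmas together with the preprocessing bound. I would first recall the overall structure: we join the $t$ preprocessed trees in pairs, so the recursion tree of join operations has $t-1$ internal nodes and depth $\lceil\log t\rceil$; the claim is that the \emph{total} work over all these $t-1$ join operations is $\Ohsmall{t}$, not $\Ohsmall{t\log t}$, which is exactly why the two amortized-cost lemmas are needed. The key point to make explicit is that each individual basic join of $U$ and $V$ (with $r(U)\ge r(V)$, say) does a constant amount of "glue" work — inserting the root of $V$ as the rightmost child of the appropriate node $p(u)$ or fusing it with $u$ — \emph{plus} the cost of locating the rank-$r(V)$ spine node, plus the cost of the node splits it triggers.

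Second, I would invoke Lemma~\ref{lemma:seq_access_node}: across the whole sequence of joins performed on any one fixed "accumulator" tree, the spine-node retrievals cost amortized constant time each, because each pop is charged against a previous combine and combines are worst-case constant. Since in the pairwise scheme the accumulators partition the $t-1$ joins, summing gives $\Ohsmall{t}$ total for all spine accesses. Third, I would invoke Lemma~\ref{lemma:seq_split_b_nodes}: the total number of degree-$b$ node splits over all the join operations is $\Ohsmall{t}$ by the potential argument (starting from $\phi_1=0$ because the trees are preprocessed, and using that each join raises the potential by at most one). Adding the $\Ohsmall{t}$ constant-per-join glue cost, the $\Ohsmall{t}$ total spine-access cost, and the $\Ohsmall{t}$ total split cost yields $\Ohsmall{t}$ overall.

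The main obstacle — really a bookkeeping subtlety rather than a deep difficulty — is making sure the amortized accounting in Lemma~\ref{lemma:seq_access_node} and the potential accounting in Lemma~\ref{lemma:seq_split_b_nodes} compose cleanly across the \emph{pairwise} (tree-shaped) join schedule rather than a single left-to-right chain. For the splits this is immediate: the potential function is defined on the multiset of all trees currently waiting to be joined, each join decreases the count of live trees by one while raising $\phi$ by at most one, so the telescoping still gives $\sum c_i \le \phi_{\text{final}} - \phi_{\text{initial}} + (t-1) = \Ohsmall{t}$. For the stack pops I would note that each tree owns its own stack, a combine transfers $V$'s stack elements onto $U$'s stack, and every pop deletes an element placed there by an earlier combine; since there are only $t-1$ combines total over the entire schedule, the total number of pops is at most $t-1$, independent of how the joins are nested.

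Finally I would conclude: the preprocessing of all $t$ trees is assumed done (it is the input condition of the lemma), the pairwise schedule issues $t-1$ join operations, and by the above the summed cost of glue, spine access, and splitting is $\Ohsmall{t}$; hence $t$ preprocessed trees are joined in $\Ohsmall{t}$ time, which is Lemma~\ref{lemma:seq_join_trees}.
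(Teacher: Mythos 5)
Your proof is correct and follows essentially the same route as the paper's (which is a three-line argument): each pairwise join costs constant ``glue'' work plus an amortized-constant spine access by Lemma~\ref{lemma:seq_access_node} plus node splits whose total number over all joins is $\Ohsmall{t}$ by Lemma~\ref{lemma:seq_split_b_nodes}, summing to $\Ohsmall{t}$ over the $t-1$ joins. Your added care about composing the two amortizations across the join schedule is sound and actually more explicit than the paper; the only blemish is a sign slip in the telescoping, which should read $\sum c_i \le \phi_{\text{initial}} - \phi_{\text{final}} + (t-1)$, harmless since $\phi_{\text{initial}}=0$ and $\phi_{\text{final}}\ge 0$.
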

\iftoggle{long}{%
	\begin{proof}
		Consider the join of two trees $U$ and $V$ where $r(U) \ge r(V)$.
		First, we search for a spine node in $U$ to
		insert the root of $V$ in amortized constant time according
		to Lemma~\ref{lemma:seq_access_node}. 
		Next, we split the nodes of degree~$b$ in the resulted tree in amortized
		constant time according to Lemma~\ref{lemma:seq_split_b_nodes}.
		Therefore, we join $t$ trees in $\Ohsmall{t}$ time.
	\end{proof}%
}{%
}%

\subsection{Optimal parallel join}
Now we present an optimal join algorithm with $\Ohsmall{k}$ work and $\Ohsmall{\log k + \log m}$ parallel time.
First, we preprocess the $k$ trees using $k$ processors with $\Ohsmall{k \log \frac{m}{k}}$ work and $\Ohsmall{\log m}$ parallel time.
Next, we split the sequence of trees into the groups of size $\log k$ and join
each group in $\Ohsmall{\log k}$ parallel time
by Lemma~\ref{lemma:seq_join_trees}. The work of this step is 
$\lceil k/\log k \rceil \Ohsmall{\log k} = \Ohsmall{k}$.
We preprocess the $\lceil k/\log k\rceil$ resulting trees
with $\Ohsmall{\lceil k/\log k \rceil \log \frac{m \log k}{k}}$ work
and $\Ohsmall{\log m}$ parallel time.
Next, we join the $\lceil k/\log k\rceil$ trees
using a non-optimal parallel join algorithm with work 
$\Ohsmall{\lceil k/\log k \rceil \log \lceil k/\log k \rceil}$
and $\Ohsmall{\log \lceil k/\log k \rceil}$ parallel time
by Lemma~\ref{lemma:nonoptimal_parallel_join}.
The total work of the algorithm is $\Ohsmall{k \log \frac{m}{k}}$,
the parallel time is $\Ohsmall{\log k + \log m}$,
and Theorem~\ref{th:optimal_parallel_join} holds.
Note that this algorithm can not maintain subtree sizes.

\section{Lightweight Parallel Join} \label{sec:lightweight_join}
The parallel join algorithm
from Section~\ref{sec:parallel_join} is optimal on a CREW PRAM. Because this algorithm is theoretical 
and difficult to implement, we suggest an other approach to join $k$ trees $T_1, \dots, T_k$. 
We devote the rest of this section to \frage{weakened claim}outlining
a proof of the following theorem.  \frage{new sentence. check. fill in citation}The idea is to replace the
pipelining tricks used in previous algorithms \cite{paul1983parallel} by a local
synchronization that can actually be implemented on asynchronous
shared memory machines.

\begin{theorem} \label{th:lightweight_parallel_join}
	We can join $k$ trees with expected $\Ohsmall{k \log \frac{m}{k}}$ work and 
	expected time $\Ohsmall{\log m + \log k}$ 
	using $p = k$ processors on a CREW PRAM.
\end{theorem}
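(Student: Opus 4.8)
The plan is to mimic the structure of the optimal parallel join (Theorem~\ref{th:optimal_parallel_join}) but to replace the delicate constant-time pipelined primitives -- the stack-based spine-rank retrieval and the coordinated parallel splitting of degree-$b$ sequences -- by a randomized, locally synchronized scheme that is implementable on an asynchronous CREW machine. First I would preprocess all $k$ input trees in parallel so that their left and right spines contain no degree-$b$ nodes, exactly as in Section~\ref{sec:parallel_join}; this costs $\Ohsmall{k\log\frac{m}{k}}$ work and $\Ohsmall{\log m}$ time and is unchanged. Then I would group the trees into $\lceil k/\log k\rceil$ consecutive blocks of size $\Theta(\log k)$ and have one PE join each block sequentially by Lemma~\ref{lemma:seq_join_trees}; since the sequential join of $t$ preprocessed trees runs in $\Ohsmall{t}$ time, each block is done in $\Ohsmall{\log k}$ time and $\lceil k/\log k\rceil\cdot\Ohsmall{\log k}=\Ohsmall{k}$ total work. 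After re-preprocessing the $\lceil k/\log k\rceil$ resulting trees (again $\Ohsmall{\log m}$ time, $\Ohsmall{k}$ work since their total size is still $m$), it remains to join these $\lceil k/\log k\rceil$ trees with $\Ohsmall{\log m + \log k}$ span and work proportional to their number.

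For that final stage, instead of the pipelined non-optimal join, I would use a randomized pairwise-join tournament: repeatedly, in rounds, each surviving tree is paired with its neighbor, and the pair is joined. The key departure is how a single join of $U$ and $V$ (say $r(U)\ge r(V)$) is carried out without global pipelining. The spine node of rank $r(V)$ in $U$ is located by walking down the right spine; naively this costs $\Ohsmall{r(U)-r(V)}$, but by charging the descent against the rank difference and summing telescopically over a tournament branch -- as in the sequential-split analysis already used in Section~\ref{sec:preliminaries} -- the total descent cost along any root-to-leaf path of the tournament is $\Ohsmall{\log m}$. The cascade of degree-$b$ splits triggered by an insertion is handled by letting each PE that creates a new degree-$b$ node spawn/hand off responsibility to a helper PE (reusing the PE freed by the smaller tree in the pair, exactly the accounting of Lemma~\ref{lemma:path_growth}), and these helpers run a short local loop, synchronizing only with the parent node they must update. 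Randomization enters to bound the depth of the tournament and to argue that contention on any single memory cell is $\Ohsmall{1}$ in expectation, so that the asynchronous local handshakes each resolve in expected constant time; a round of the tournament then costs expected $\Ohsmall{1}$ amortized time beyond the telescoped spine descent, giving expected $\Ohsmall{\log k}$ rounds $\times\ \Ohsmall{1}$ plus the $\Ohsmall{\log m}$ descent term, i.e. expected span $\Ohsmall{\log m+\log k}$. The work is $\Ohsmall{\lceil k/\log k\rceil}$ for this stage plus the $\Ohsmall{k\log\frac{m}{k}}$ from preprocessing, matching the claimed bound.

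The main obstacle -- and the reason the theorem is only stated with \emph{expected} bounds -- is showing that the local synchronization actually terminates quickly without a global clock. One must argue that when a PE handling a degree-$b$ node waits for its parent to be split, the chain of such waits has expected length $\Ohsmall{1}$ per join (this is where Lemma~\ref{lemma:path_growth} is essential: each join lengthens a degree-$b$ run by at most one, so the expected work done by helper PEs telescopes), and that no PE is starved because several PEs concurrently touch the same parent pointer. I would control the latter by a randomized back-off or by randomly permuting PE-to-tree assignments so that, with high probability, the pairs joined in one round touch disjoint node sets; a Chernoff/union-bound argument over the $\Ohsmall{\log k}$ rounds then yields the expected-time claim. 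The remaining routine steps -- verifying the spine invariant is preserved (splitting a degree-$b$ run keeps every such node the parent of its rightmost child, as noted in Section~\ref{subsubsec:split_b_degree_seq}), and that subtree ordering and balance are maintained -- go through exactly as in the deterministic version.
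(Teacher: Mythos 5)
Your proposal diverges substantially from the paper's, and the step it hinges on fails. You run an ordinary pairwise tournament and claim that the cost of locating the rank-$r(V)$ spine node, which is $\Theta(|r(U)-r(V)|)$ per join, ``telescopes over a tournament branch'' to $\Ohsmall{\log m}$ as in the sequential-split analysis. That telescoping is only valid when the ranks of the trees absorbed into a fixed tree are monotone. In a pairwise tournament they need not be: take one tree of rank $h=\Theta(\log m)$ followed by $k-1$ trees of constant rank. Along the tournament branch containing the tall tree, every level pairs a rank-$h$ tree with a rank-$\Ohsmall{1}$ tree, so every level costs $\Theta(\log m)$ and the span is $\Theta(\log m\log k)$ --- exactly the bound the theorem is supposed to beat. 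This is the problem the paper's design is built around: a tree is joined only when it is a \emph{local minimum} of the height sequence, with the coin flips $c^i_t$ used solely to break ties on ``plains'' of equal heights (not to control memory contention, which does not arise on a CREW PRAM, and not via random permutation of assignments, which would destroy the key-order adjacency that joins require). Lemma~\ref{lemma:ascending_heights} then shows that the trees successively joined into any fixed tree have ascending ranks, which is what makes the descents telescope to $\Ohsmall{\log m}$ and also means the spine-pointer arrays below the current rank never need repairing; the expected $\Ohsmall{\log k}$ term comes from each plain shrinking by a constant factor per iteration.

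A second gap: you handle split cascades by helper PEs waiting on their parents and invoke Lemma~\ref{lemma:path_growth} to bound the expected chain length by $\Ohsmall{1}$. That lemma only says a degree-$b$ run grows by at most one per join; the run that must be split in a given step can still have length $\Theta(\log m)$, and a chain of sequential parent-waits over it costs $\Theta(\log m)$ in that single step. The non-optimal join avoids this with the dedicated-PE constant-time parallel split of the whole run --- precisely the pipelined machinery the lightweight algorithm is meant to eliminate. The paper's replacement is \emph{subtree stealing}: when inserting the new child into $p(n)$ would overflow a degree-$b$ node, the joining tree instead absorbs $n$ together with the new child (a constant-time join of equal-rank trees) and grows in rank by one, so no split cascade is ever triggered. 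Your $\log k$-blocking front end, borrowed from the optimal join, is harmless but is not part of this algorithm; the paper bounds the expected work directly by charging the iterations of PE $t$ to $\max(r(T^1_{t-1}),r(T^1_{t+1}))$.
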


We decrease the running time of the parallel join 
that joins $k$ trees in $\Ohsmall{\log m \log k}$ time
(see Section~\ref{sec:parallel_join}) by
using arrays with pointers to right (left) spine nodes. We build such arrays
during the preprocessing step for each tree. See details in Section~\ref{subsec:non_opt_join}.

First, we assign a PE $t$ to tree $T_t$.
Next, our algorithm works in iterations. We define the sequence of trees
present during iteration $i$ as $T^i_1, \dots, T^i_{k_i}$ ($k_1 = k$).
In the beginning of each iteration we generate a random bit $c^i_t$ 
for tree $T^i_t$ where $t = 1 \dots k_i$.
During an iteration $i$ PE $t$ joins $T^i_t$ to $T^i_{t - 1}$ (or $T^i_t$ and $T^i_{t + 1}$ if $t = 1$),
if one of the following conditions holds:
\iftoggle{long}{%
	\begin{enumerate}
		\item $r(T^i_{t - 1}) > r(T^i_t)$ and $r(T^i_t) < r(T^i_{t + 1})$
		\item $r(T^i_{t - 1}) > r(T^i_t)$, $r(T^i_t) = r(T^i_{t + 1})$,
		and $c^i_t = 1$
		\item $r(T^i_{t - 1}) = r(T^i_t)$, $r(T^i_t) < r(T^i_{t + 1})$,
		$c^i_{t - 1} = 0$, and $c^i_t = 1$
		\item $r(T^i_{t - 1}) = r(T^i_t)$, $r(T^i_t) = r(T^i_{t + 1})$, $c^i_{t - 1} = 0$, and $c^i_t = 1$
	\end{enumerate}
}{%
(1) $r(T^i_{t - 1}) > r(T^i_t)$ and $r(T^i_t) < r(T^i_{t + 1})$;
(2) $r(T^i_{t - 1}) > r(T^i_t)$, $r(T^i_t) = r(T^i_{t + 1})$,
and $c^i_t = 1$;
(3) $r(T^i_{t - 1}) = r(T^i_t)$, $r(T^i_t) < r(T^i_{t + 1})$,
$c^i_{t - 1} = 0$, and $c^i_t = 1$;
(4) $r(T^i_{t - 1}) = r(T^i_t)$, $r(T^i_t) = r(T^i_{t + 1})$, $c^i_{t - 1} = 0$, and $c^i_t = 1$;
}%
These rules ensure that only trees at locally minimal height
are joined and that ties are broken randomly and in such a way that no chains of join operations occur in a single step.
In the beginning, the join operation proceeds like in basic join
operation~\ref{paragraph:basic_join}.  But we do not insert a new
child into the $p(n)$ ($n \in T^i_{t - 1}$ and $r(n) = r(T^i_t)$) if
its degree equals $b$; instead we take $n$ and the new child, join
them, and put the result into $T^i_t$.  We do this in constant time,
since the ranks of $T^i_t$ and the new child are equal. The rank of
$T^i_t$ increases by one. We call this procedure \emph{subtree
	stealing}. This avoids chains of splitting operations that would lead to 
non-constant work in iteration $i$.

\begin{lemma} \label{lemma:ascending_heights}
	Assume that we joined two trees $T^i_{t_i}$ and $T^j_{t_j}$ ($i < j$) with a tree $T$,
	where no keys in $T$ is larger than any key in $T^i_{t_i}$ and $T^j_{t_j}$,
	and
	no joins with this tree occurred between the iterations $i$ and $j$. Then
	$r(T^i_{t_i}) \le r(T^j_{t_j})$.
\end{lemma}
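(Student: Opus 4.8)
The plan is to track what happens to the tree $T$ (the one on the left, into which subtrees keep getting joined from the right) between iterations $i$ and $j$ and argue that its rank can only grow, so the subtree that gets joined to it at iteration $j$ must be at least as tall as the one joined at iteration $i$. First I would set up notation: at iteration $i$, the tree $T^i_{t_i}$ lies immediately to the right of $T$, and the join rule fired for $T^i_{t_i}$ because it was at a locally minimal height; in particular the rule guarantees $r(T^i_{t_i}) \le r(T)$ (it is joined \emph{into} $T$, descending the right spine of $T$, using the basic join of Section~\ref{paragraph:basic_join} possibly modified by subtree stealing). After this join, $T$ has the old $T^i_{t_i}$ hanging off its right spine, and its rank is either unchanged or increased by one (via subtree stealing or a root split). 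So immediately after iteration $i$, $r(T) \ge r(T^i_{t_i})$.

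Next I would observe the key monotonicity: since the hypothesis says no further join with $T$ occurs strictly between iterations $i$ and $j$, the tree $T$ is untouched on its right side during that interval, so its rank at the start of iteration $j$ is still $\ge r(T^i_{t_i})$. (I should be slightly careful here about whether $T$ itself could be joined \emph{into} something to its left; but in that case the statement is about the merged object, and ranks only grow under joins, so $r$ of the relevant left-tree at iteration $j$ is still $\ge r(T^i_{t_i})$ — this is the one place the phrase ``no joins with this tree occurred'' in the hypothesis has to be read carefully.) Then at iteration $j$, the join rule fires for $T^j_{t_j}$ precisely because $T^j_{t_j}$ sits at a local minimum of height relative to its neighbours, one of which is $T$; since $T^j_{t_j}$ is joined into $T$ we again have, by the structure of the rules~(1)--(4), that $r(T^j_{t_j}) \le r(T)$. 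Combining, $r(T^i_{t_i}) \le r(T) \le \dots$ — wait, that gives the wrong direction, so the actual argument must instead compare $T^i_{t_i}$ as it sits \emph{inside} $T$ at iteration $j$.

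So the corrected core of the proof is this. After iteration $i$, the old tree $T^i_{t_i}$ is a subtree hanging on the right spine of $T$ at some rank $\rho \ge r(T^i_{t_i})$; in fact $\rho = r(T^i_{t_i})$ right after the join, but subsequent joins from the right that attach below it can only be at rank $< \rho$ originally and then absorbed, so the node where the old $T^i_{t_i}$ sat stays at rank $\ge r(T^i_{t_i})$, and the subtree rooted there only grows. At iteration $j$, the tree $T^j_{t_j}$ is joined onto the right spine of $T$ at rank $r(T^j_{t_j})$, which is below (or equal in rank to) everything currently on that spine that came from earlier joins — in particular below the surviving image of $T^i_{t_i}$ — because the join descends the right spine until it reaches a node of matching rank, and $T^j_{t_j}$'s rank is minimal among the neighbours. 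Hence $r(T^j_{t_j}) \ge r(T^i_{t_i})$ is exactly what ``$T^j_{t_j}$ attaches at or above the rank where $T^i_{t_i}$ attached'' means, once one unwinds that the attachment rank equals the root rank of the incoming tree. I would formalize this with a short induction on the joins into $T$ occurring in iterations $i+1, \dots, j$, maintaining the invariant that the rank of the (possibly grown) former-$T^i_{t_i}$ node is $\ge r(T^i_{t_i})$ and that any new tree joined in has root rank $\le$ the root rank of the previously-joined tree's current node.

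The main obstacle I anticipate is making precise the claim that ``the rank at which a tree attaches on the right spine is its root rank, and this node's rank never decreases under later joins-from-the-right,'' including the subtree-stealing case where a degree-$b$ node is merged with its incoming child and the incoming tree's rank increases by one — I need to check that stealing still leaves the relevant node at rank $\ge r(T^i_{t_i})$ and does not reorder things. A secondary subtlety is the exact reading of the hypothesis ``no joins with this tree occurred between iterations $i$ and $j$'': I would interpret it as ``$T$ received no new right-neighbour joins in iterations $i+1,\dots,j-1$, and possibly was itself joined leftward, in which case the statement concerns the resulting tree,'' and I would state this interpretation explicitly at the start of the proof so the monotonicity-of-rank argument goes through cleanly.
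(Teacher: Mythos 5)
You correctly notice that the naive argument ($r(T)\ge r(T^i_{t_i})$ at iteration $i$ and $r(T^j_{t_j})\le r(T)$ at iteration $j$) orders both trees below $r(T)$ and therefore proves nothing, but your replacement argument does not close the gap. The statement you end up relying on --- that $T^j_{t_j}$ ``attaches at or above the rank where $T^i_{t_i}$ attached'' --- is just the conclusion $r(T^j_{t_j})\ge r(T^i_{t_i})$ rephrased geometrically, and nothing you cite derives it: the spine-descent mechanics of the join and the local-minimality of $T^j_{t_j}$ among its neighbours only yield \emph{upper} bounds on $r(T^j_{t_j})$ (namely $r(T^j_{t_j})\le r(T)$ and $r(T^j_{t_j})\le r(T^j_{t_j+1})$), never a lower bound in terms of $r(T^i_{t_i})$. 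Your proposed induction invariant even has the inequality pointing the wrong way (``any new tree joined in has root rank $\le$ the root rank of the previously-joined tree's current node'' would give a \emph{descending} sequence), and in any case the induction over ``joins into $T$ occurring in iterations $i+1,\dots,j$'' is vacuous, since the hypothesis explicitly excludes any join with $T$ strictly between $i$ and $j$. Your text also asserts in one breath that $T^j_{t_j}$ attaches ``below the surviving image of $T^i_{t_i}$'' and in the next that it attaches ``at or above'' it; these are contradictory.

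The missing idea is to look at the \emph{right} neighbour of $T^i_{t_i}$, not at $T$ or at where things sit on $T$'s spine. Because the join rules (1)--(4) only fire at a locally minimal height, the fact that $T^i_{t_i}$ was joined leftward at iteration $i$ forces $r(T^i_{t_i})\le r(T^i_{t_i+1})$. Since no join with $T$ happens again until iteration $j$, the next tree that is joined with $T$, namely $T^j_{t_j}$, is whatever $T^i_{t_i+1}$ has become in the meantime, i.e.\ the result of zero or more joins involving $T^i_{t_i+1}$; joins never decrease the rank of the resulting tree, so $r(T^j_{t_j})\ge r(T^i_{t_i+1})\ge r(T^i_{t_i})$. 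This is exactly the paper's two-line proof; your approach, by contrast, never uses the right neighbour and therefore has no source for the required lower bound.
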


\begin{proof}
	Since $T^i_{t_i}$ and $T$ were joined
	then $r(T) \ge r(T^i_{t_i})$ and $r(T^i_{t_i}) \le r(T^i_{t_i + 1})$.
	But $r(T^j_{t_j}) \ge r(T^i_{t_i + 1})$. Hence, $r(T^j_{i_j}) \ge r(T^i_{t_i})$.
\end{proof}

Lemma~\ref{lemma:ascending_heights} shows that we join trees in ascending order of their ranks.
Therefore, we do not need to update the pointers that point to the nodes with rank
less than $r(T^i_{t_i})$, since $r(T^j_{t_j}) \ge r(T^i_{t_i})$; 
that is, we do not join trees smaller then $r(T^i_{t_i})$.

\begin{lemma} \label{lemma:lightweight_parallel_join}
	The lightweight parallel join algorithm joins $k$ trees using expected $\Ohsmall{\log m + \log k}$ iterations.
\end{lemma}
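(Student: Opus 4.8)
The plan is to show that after $\Ohsmall{\log m}$ iterations the number of distinct ranks present among the surviving trees has dropped enough that the process is essentially a tournament on $\le k$ items, and then that $\Ohsmall{\log k}$ further iterations finish it, so the total is $\Ohsmall{\log m + \log k}$ in expectation. The backbone of the argument is Lemma~\ref{lemma:ascending_heights}: a tree only ever gets joined "from the left" by trees of non-decreasing rank, so over the whole execution each surviving tree's rank is monotonically non-decreasing, and a tree at a locally minimal rank among its current neighbors is always eligible (by rule~(1)), while ties at a local minimum are resolved by the random bits (rules~(2)--(4)).

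First I would argue the potential decreases. Fix a tree that is currently at a rank which is a strict local minimum among $T^i_{t-1},T^i_t,T^i_{t+1}$: it is joined with certainty in iteration $i$, so it disappears. For a tree at a local minimum that is \emph{tied} with a neighbor, rules~(2)--(4) say it is joined iff a constant-probability event on its bit (and possibly the left neighbor's bit) holds; so each maximal run of equal-rank locally-minimal trees loses a constant fraction of its members in expectation per iteration, by a standard "random independent-set on a path" argument (no two adjacent trees in such a run both initiate a join, because rule~(3)/(4) require $c^i_{t-1}=0$ while $c^i_{t}=1$). Hence in each iteration the number of trees whose rank equals the current global minimum rank shrinks by an expected constant factor, and once they are all gone the global minimum rank strictly increases. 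Since ranks are in $\{1,\dots,\Ohsmall{\log m}\}$ and the minimum rank never decreases (Lemma~\ref{lemma:ascending_heights}), after $\Ohsmall{\log m}$ "phases" — where a phase is the block of iterations spent eliminating one rank value — the whole forest collapses; and each phase takes expected $\Ohsmall{\log(\text{\#trees at that rank})}=\Ohsmall{\log k}$ iterations.

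That naive bookkeeping gives $\Ohsmall{\log m \cdot \log k}$, so the real work is to overlap the phases. The key observation is that eliminations at different rank levels happen \emph{simultaneously}: a tree at rank $\rho$ that is not at the current global minimum can still be the left-joinee of a smaller tree, and more importantly, as soon as its left neighbor's rank catches up or passes it, it becomes eligible. So I would instead track, for the tree $T^i_{t}$ that will be last to survive in a given contiguous region, the pair (its rank, the number of trees still to its left in that region at $\le$ that rank). Using Lemma~\ref{lemma:ascending_heights}, feeding a tree from the left always uses a tree of rank $\ge$ the rank at the previous feed, so the total number of feeds any single surviving tree receives is $\Ohsmall{\log m}$ (ranks only go up, and each feed either raises the rank or, via subtree stealing, still raises it by one). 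Combining this with the per-iteration constant-factor shrinkage of each equal-rank run, a potential-function / delay-sequence argument (trace back the chain of joins that produced the final tree; each link is charged either to a rank increase or to a constant-probability coin flip) shows the longest such chain has expected length $\Ohsmall{\log m + \log k}$, which bounds the number of iterations.

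The main obstacle is precisely this last step: decoupling the two logarithms, i.e.\ showing the rank-climbing (which costs $\log m$) and the tie-breaking tournament (which costs $\log k$) do not multiply but add. I would handle it with a delay-sequence / backward-analysis argument: define the \emph{critical path} as the sequence of trees $T^{i_1}_{t_1}\to T^{i_2}_{t_2}\to\cdots$ where each is the tree that was joined into the next, ending at the final tree; show $i_{j+1}-i_j$ is, in expectation, either $\Ohsmall{1}$ accompanied by a rank increase along the path, or geometric with constant mean accompanied by a successful tie-break; bound the number of rank-increase links by $\Ohsmall{\log m}$ and the number of consecutive tie-break links between rank increases by $\Ohsmall{\log k}$ in expectation (a run of $k$ equal-height trees needs $\Ohsmall{\log k}$ rounds of the randomized independent-set elimination). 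A concentration bound (Chernoff on the sum of the geometrics, since the coin flips in distinct iterations are independent) then yields the expected-$\Ohsmall{\log m+\log k}$ iteration count, completing the proof.
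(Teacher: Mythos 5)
Your proposal assembles the same two ingredients as the paper -- the monotone growth of ranks from Lemma~\ref{lemma:ascending_heights}, which caps the rank-climbing/stealing events at $\Ohsmall{\log m}$, and the randomized independent-set elimination on runs (``plains'') of equal-height trees, which costs $\Ohsmall{\log k}$ rounds -- and you correctly identify that the whole difficulty is making these two costs add rather than multiply. The gap is that your proposed resolution does not actually achieve this. In your delay-sequence accounting you bound ``the number of consecutive tie-break links between rank increases by $\Ohsmall{\log k}$''; with $\Ohsmall{\log m}$ rank-increase links on the critical path, that bookkeeping yields $\Ohsmall{\log m\cdot\log k}$, i.e.\ exactly the multiplicative bound you set out to avoid. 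Nothing in the proposal prevents the path from encountering a fresh plain at every rank level, so the ``Chernoff on the sum of geometrics'' step has nothing of the right total size to concentrate around. The decoupling is asserted, not proved.

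The paper avoids a per-path analysis altogether and decouples the terms by two \emph{global} amortizations. For the randomized part it tracks the total number of trees contained in plains, summed over all plains simultaneously: each such tree is eliminated in a given iteration with probability at least $1/4$, so this global quantity shrinks by an expected constant factor per iteration, and (the paper argues) other joins may merge plains but never add trees to them; hence after $\Ohsmall{\log k}$ iterations \emph{all} ties everywhere are resolved, not $\Ohsmall{\log k}$ per rank level. For the deterministic part, every join along a strictly ascending sequence and every subtree stealing is charged to a rank increment of the receiving tree, and since ranks only increase (Lemma~\ref{lemma:ascending_heights}) and are bounded by $\Ohsmall{\log m}$, the total number of such iterations is $\Ohsmall{\log m}$ over the whole execution. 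The two budgets cover disjoint kinds of progress, so they add. To repair your proof you would need either to adopt this global potential on total plain mass, or to show that the plain sizes encountered along your critical path decrease geometrically so that their logarithms sum to $\Ohsmall{\log k}$; as written, neither is established.
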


\begin{proof}
	Consider a tree $T^i_t$ where $r(T^i_{t - 1}) > r(T^i_t)$ and $r(T^i_t) < r(T^i_{t + 1})$.
	Subtree stealing from tree $T^i_t$ may occur $\Ohsmall{\log m}$ times over all iterations.
	Since
	the heights of the trees that steal from $T^i_t$ are in ascending order (according
	to Lemma~\ref{lemma:ascending_heights}) and increment by one after each stealing.
	
	Consider a sequence of $l$ trees with heights in ascending order. The PE dedicated to the smallest tree 
	joins this sequence performing $l$ iterations, but the total length of such
	sequences over all iterations is $\Ohsmall{\log m}$.
	
	Consider the ``plains'' of subsequent trees with equal height. Let 
	$l$ denote the sum of the plain sizes.
	A tree $t$ in a plain is joined to its left neighbor with probability
	at least $1/4$ (if its $c^i_t$ is $1$ and that of its left neighbor is $0$).
	Hence, the expected number of joins on plains is $l/4$. This means that, in expectation,
	the total plain size shrinks by a factor $3/4$ in each iteration. 
	Overall, $\Oh{\log k}$ iterations suffice to remove all plains in expectation. 
	To see that fringe cases are no problem note that also trees at the border of 
	a plain are joined with probability at least $1/4$, possibly higher since
	a tree at the left border of a plain is joined with probability $1/2$.
	Furthermore, other join operations may merge plain but they never increase the 
	number of trees in plains.
\end{proof}

Combining the results of Lemma~\ref{lemma:lightweight_parallel_join} and the fact that we do not need
to update pointers to right (left) spines we prove that the running time of this algorithm is 
$\Ohsmall{\log m + \log k}$ in expectation. 
The work of the algorithm is $\Ohsmall{k \log \frac{m}{k}}$ in expectation, since
the PE $t$ ($t = 2, \dots, k - 1$) performs the number of iteration that is proportional 
to $\max (r(T^1_{t - 1}), r(T^1_{t + 1}))$.
This proves Theorem~\ref{th:lightweight_parallel_join}.

\section{Bulk Updates} \label{sec:bulk_updates}
We present a parallel data structure with bulk updates. First, we describe
a basic concepts
behind our data structure. Suppose we have an $(a, b)$-tree $T$ and a sorted sequence
$I$ for the bulk update 
of $T$. Our parallel bulk update algorithm is based on the idea presented in
\cite{FriasS07}
and consists of the three phases: \textbf{split}, \textbf{insert/union} and
\textbf{join}.

First, we choose a sorted sequence of separators $S$ from $I$ and $T$.
Next, the \textbf{split} phase splits $T$ into $p$ trees using separators $S$.
Afterwards, the \textbf{insert/union} phase inserts/unions elements of each subsequence of $I$
to/with the corresponding tree. Finally, the \textbf{join} phase joins the trees back into a tree.
The following theorem summarizes the results of this section.

\begin{theorem}\label{thm:bulkUpdate}
	A bulk update can be implemented to run in  
	$\Ohsmall{\frac{|I|}{p}\log \frac{|T|}{|I|} + \log p + \log |T|}$ parallel time on $p$ PEs of
	a CREW PRAM.
\end{theorem}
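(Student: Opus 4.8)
The plan is to analyze each of the three phases (split, insert/union, join) separately and add up the costs, using the results already established in the preceding sections. Throughout I write $m = |T|$ and $k = |I|$, and I assume $p \le k$ so that each PE handles a nonempty chunk of the update sequence; the degenerate case $p > k$ is handled by simply not using the excess processors. The key preparatory step is the choice of the separator sequence $S$: I would pick $S$ to consist of every $(k/p)$-th element of $I$ (after $I$ has been sorted, which is assumed by the problem setup), so that $|S| = p-1$ and each of the $p$ resulting subsequences of $I$ has size exactly $k/p$. This balancing is what makes the per-PE bound $\frac{k}{p}\log\frac{m}{k}$ come out cleanly.

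For the \textbf{split} phase I would invoke Theorem~\ref{th:parallel_split}: splitting $T$ into $p$ subtrees $T_1,\dots,T_p$ along the $p-1$ separators in $S$ costs $\Ohsmall{\log|T|}$ span (and $\Ohsmall{p\log|T|}$ work, which is dominated by the insert/union work below as long as $p \le k$). For the \textbf{insert/union} phase, PE $i$ runs the sequential union of its length-$(k/p)$ subsequence with $T_i$; by the analysis in Section~\ref{paragraph:seq_merge} (the Brown--Tarjan-style bound via Huddleston--Mehlhorn), this costs $\Ohsmall{\frac{k}{p}\log\frac{|T_i|}{k/p}}$ work per PE. The one subtlety here is bounding $\sum_i |T_i| \le |T| + $ (something small) and arguing that $\log\frac{|T_i|}{k/p}$ is, in the worst case over the distribution of $|T_i|$, $\Ohsmall{\log\frac{|T|}{k}}$: since the $T_i$ partition $T$ and there are $p$ of them, the average $|T_i|$ is $m/p$, so $\log\frac{|T_i|}{k/p} = \Ohsmall{\log\frac{m/p}{k/p}} = \Ohsmall{\log\frac{m}{k}}$ on average; by concavity of $\log$ this is also the right bound for the sum $\sum_i \frac{k}{p}\log\frac{|T_i|}{k/p} = \Ohsmall{k\log\frac{m}{k}}$, and the span contribution is the max, $\Ohsmall{\frac{k}{p}\log\frac{m}{k}}$. (A few $T_i$ could be much larger than $m/p$, but then correspondingly many others are small, and the logarithm caps the damage — this is where I would be careful to write $\log$ as ``at least one'' per the paper's convention to kill the $k\approx m$ corner case.)

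For the \textbf{join} phase I would apply the optimal parallel join, Theorem~\ref{th:optimal_parallel_join}, to glue $T_1,\dots,T_p$ back together: this costs $\Ohsmall{p\log\frac{m}{p}}$ work and $\Ohsmall{\log p + \log m}$ span, both dominated by the earlier phases. Adding the three phases: the work is $\Ohsmall{p\log m} + \Ohsmall{k\log\frac{m}{k}} + \Ohsmall{p\log\frac{m}{p}} = \Ohsmall{k\log\frac{m}{k}}$ (using $p\le k$), and the span is $\Ohsmall{\log|T|} + \Ohsmall{\frac{k}{p}\log\frac{m}{k}} + \Ohsmall{\log p + \log m}$, which is exactly the claimed $\Ohsmall{\frac{|I|}{p}\log\frac{|T|}{|I|} + \log p + \log|T|}$. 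I expect the main obstacle to be the bookkeeping in the insert/union phase: making rigorous that the worst-case choice of separators still yields $\sum_i \frac{k}{p}\log\frac{p|T_i|}{k} = \Ohsmall{k\log\frac{m}{k}}$ and that the per-PE maximum is $\Ohsmall{\frac{k}{p}\log\frac{m}{k}}$, rather than some larger quantity driven by one oversized subtree. The clean way to handle this is to note that the separators are drawn from $I$ itself, so each $T_i$ also has at most $k/p + 1$ update elements routed to it regardless of its size, and then appeal directly to the Huddleston--Mehlhorn bound which is stated in terms of the number of insertions relative to the host tree size — the concavity argument then closes the gap. The remaining details (barrier synchronizations between phases, each costing $\Ohsmall{\log p}$; erasing scratch nodes; the $p>k$ case) are routine and I would relegate them to the full version.
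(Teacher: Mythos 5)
Your three-phase decomposition (split, insert/union, join) and the accounting for the split and join phases match the paper, but there is a genuine gap in the insert/union phase, and it is exactly the point where the paper does something you omit. With separators drawn uniformly from $I$ alone, nothing prevents a single subtree $T_i$ from having size $\Theta(|T|)$; the PE assigned to it then pays $\Theta\bigl(\frac{|I|}{p}\log\frac{|T|}{|I|/p}\bigr)=\Theta\bigl(\frac{|I|}{p}\log\frac{|T|}{|I|}+\frac{|I|}{p}\log p\bigr)$ by the Huddleston--Mehlhorn bound. The parallel time of the phase is the \emph{maximum} over PEs, not the average, so your concavity argument---which correctly bounds the total work by $\Ohsmall{|I|\log\frac{|T|}{|I|}}$---does not help here. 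The extra additive term $\frac{|I|}{p}\log p$ is not absorbed by $\log p+\log|T|$ in general (take $|T|=2|I|$ and $p=\sqrt{|I|}$: the claimed span is $\Theta(\sqrt{|I|})$ but the oversized-subtree PE costs $\Theta(\sqrt{|I|}\log|I|)$). Re-routing the argument through ``each $T_i$ receives at most $|I|/p+1$ updates'' does not close this, because the Huddleston--Mehlhorn cost still depends on $|T_i|$.

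The paper's fix is the \emph{selection with double binary search}: in addition to the $p-1$ quantile separators of $I$, it selects $p-1$ separators that split the sequence represented by $T$ into equal parts (using subtree sizes stored in the nodes), and partitions both $I$ and $T$ along the union of all $2p-2$ separators into $2p-1$ pieces. This guarantees $|I_x|\le|I|/p$ \emph{and} $|T_x|\le|T|/p$ simultaneously for every piece, so each PE's constant number of sequential unions costs $\Ohsmall{\frac{|I|}{p}\log\frac{|T|/p}{|I|/p}}=\Ohsmall{\frac{|I|}{p}\log\frac{|T|}{|I|}}$ in the worst case (using that $x\log\frac{c}{x}$ is increasing for $c\ge ex$, and that the log term is $\Ohsmall{1}$ otherwise). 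Two further consequences you should note: the split phase must produce $2p-1$ subtrees rather than $p$, and the join phase must use the \emph{non-optimal} parallel join of Lemma~\ref{lemma:nonoptimal_parallel_join} rather than Theorem~\ref{th:optimal_parallel_join}, because the optimal version cannot maintain the subtree sizes that the double binary search needs for the next bulk update; both changes still fit inside the $\Ohsmall{\log p+\log|T|}$ budget.
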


The algorithm needs $\Ohsmall{|I|\log \frac{|T|}{|I|}}$ work
and $\Ohsmall{\log |T|}$ time using $p = |I|\log\frac{|T|}{|I|} / \log |T|$ on a CREW PRAM.
Note that our algorithm is optimal, since sequential
union of $T$ and $I$ requires $\Omega(|I|\log
\frac{|T|}{|I|})$ time in the worst case. See~\cite{DBLP:journals/acta/HuddlestonM82}.

\subsection{Selecting the Separators.}\label{sec:selection}
The complexity of the algorithm depends on how we choose the separators. 
Once we have selected the separators we can split sequence $I$ and tree $T$
according to them.
Depending on the selection of the separators we will perform either the insert or union phase.

In \textit{Uniform Selection}, we select $p - 1$ separators that 
split $I$ into $p$ disjoint equal-sized subsequences $\mathcal{I} = \{ I_1, I_2, \dots, I_p \}$
where $\bigcup_{i = 1}^{p} I_i = I$ (we assume that $|I|$ is divisible by $p$ for simplicity).

\paragraph{Selection with Double Binary Search}
We adapt a technique used in parallel merge algorithms
\iftoggle{long}{%
	\cite{journals/jal/ShiloachV81, Jaja_Parallel_Algs}
}{%
\cite{journals/jal/ShiloachV81}
}%
for our purpose.
First, we select $p - 1$
separators $S = \langle s_1, \dots, s_{p - 1} \rangle$ that split $I$ into $p$ equal
parts. Next, we select $p - 1$
separators $T_{\mathrm{sep}} = \langle t_1, \dots, t_{p - 1} \rangle$ 
that divide the sequence represented by the tree $T$ into equal parts.
We store in each node of $T$ the sizes of its subtrees
in order to find these separators in logarithmic time.
See the details in~\cite{Cormen:2009:IAT:1614191}.

\iftoggle{long}{%
	\begin{figure}
		\centering
		\includegraphics[scale = 1.2]{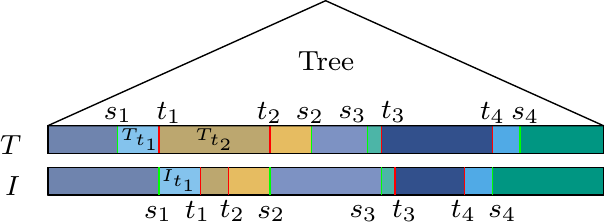}
		\caption{Separators of $I$ and $T$. Here $p = 5$. For example,
			the elements from $I_{t_1}$ and $T_{t_1}$ lay in intersecting ranges. But the elements from
			$T_{t_1}$
			and $T_{t_2}$ lay in disjoint ranges. As well as the elements of $I_{t_1}$ and $T_{t_2}$.}
		\label{fig:double_selection}
	\end{figure}
}{}%

Now consider $x \in S \cup T_{\mathrm{sep}}$. The subsequence $I_x$ of $I$ includes
all elements $\le$ $x$ but greater than $y$, the largest
element in $S \cup T_{\mathrm{sep}}$ that is less than $x$.
Similarly, we define subtree $T_x$.
We implicitly represent the subsequence $I_x$ and subtree $T_x$ by $y$ and $x$.
Each PE uses binary search to find $y$.
For example, PE $i$ calculates the implicit representation of $I_{s_i}, T_{s_i}, I_{t_i}, T_{t_i}$.
These searches take time at most $\mathcal{O}(\log p)$, since $S$ and $T_{\mathrm{sep}}$ are sorted and 
we can use binary search.
Thus, we split $I$ and $T$ into $2p - 1$ parts each:
$\mathcal{I} = \{ I_x : x \in S \cup T_{\mathrm{sep}}\}$ and $\mathcal{T} = \{ T_x : x \in S
\cup T_{\mathrm{sep}}\}$ respectively.

Note that elements of two arbitrary subsequences from $\mathcal{I} \cup \mathcal{T}$
lay in
disjoint ranges, except for $I_x$ and $T_x$ for the same $x \in S \cup T_{\mathrm{sep}}$. Also $|I_x|
\le |I| / p$ and
$|T_x| \le |T| / p$, since the distances between neighboring separators in $S$ and $T$ are at most
$|I| / p$ and $|T| / p$, respectively.
\iftoggle{long}{%
	See Figure~\ref{fig:double_selection}.
}{}%


%

\paragraph{Split and Insert phases}
First, we select $p - 1$ separators using uniform selection.
Next, we split the tree $T$ into $p$ subtrees $T_1, \dots, T_p$ using the parallel split algorithm from
Section~\ref{sec:parallel_split}.
Finally, we insert the subsequence $I_i$ into corresponding subtree $T_i$, for $i = 1 \dots p$,
in $\mathcal{O}(|I|/p \log|T|)$ time on a CREW PRAM.

\paragraph{Split and Union phase}
Suppose we selected the separators using selection with double binary search.
First, we split the sequence $I$ into subsequences $\mathcal{I}$.
Next, we split $T$ into subtrees $\mathcal{T}$.
For each subtree $T_x \in \mathcal{T}$ we split $T$ by the representatives $y$ and $x$ of
$T_x$ using the parallel split algorithm from Section~\ref{sec:parallel_split}.


%

Now each PE $i$ unions $I_{s_i}$ with $T_{s_i}$ and $I_{t_i}$ with $T_{t_i}$ ($s_i, t_i \in S
\cup T_{\mathrm{sep}}$) using the sequential union algorithm from Section~\ref{sec:preliminaries}.
This algorithm unions $I_x \in \mathcal{I}$ and $T_x$ 
in $\mathcal{O}(|I_x| \log\frac{|T_x|}{|I_x|})$ time.
Hence, the union phase can be done in $\mathcal{O}(|I|/p \log\frac{|T|}{|I|})$ time,
because $|I_x| \le |I| / p$ and $|T_x| \le |T| / p$ for any $x \in S \cup T_{\mathrm{sep}}$
and because for $|T_x|\geq e|I_x|$, $|I_x| \log\frac{|T_x|}{|I_x|}$ is maximized for 
the largest allowed value of $|I_x|$. For the remaining cases, we can use that the log term is bounded by a constant anyway.

\paragraph{Join Phase}\label{subsec:join}
We join the trees $T_1, \dots, T_k$ ($k = p$ or $2p - 1$) into the tree $T$ in
$\mathcal{O}(\log p + \log |T|)$ time on a CREW PRAM
using the non-optimal parallel join algorithm from Section~\ref{sec:parallel_join}.
We use the non-optimal version, because it maintains subtree sizes in nodes of trees.

%

\section{Operations on Two Trees}\label{furtherOps}

\begin{theorem}
	Let $U$ and $T$ denote search trees ($|U|\leq |T|$) that we identify with sets of elements. 
	Let $k=\max\{|U|,|T|\}$ and $n=\max\{k, p\}$.
	Search trees for $U\cup T$ (union), $U\cap T$ (intersection), $U\setminus T$ (difference), and $U\bigtriangleup T$ (symmetric difference) can be computed in time $\Ohsmall{k/p\log\frac{n}{k}+\log n}$.
\end{theorem}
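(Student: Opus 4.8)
The plan is to reduce all four set operations to a single ``split--process--join'' pipeline of exactly the same shape as the bulk-update algorithm of Section~\ref{sec:bulk_updates}, using the smaller tree $U$ to play the role of the update sequence $I$. First I would materialize the elements of $U$ into a sorted array in $\Ohsmall{|U|/p + \log|U|}$ time (a parallel in-order traversal of $U$ using the subtree sizes maintained by the non-optimal join, or simply treating $U$ as already sorted and splitting it like $I$). From here on set $|U|$ in the place of $|I|$ and $|T|$ in the place of $|T|$, so the target bound $\Ohsmall{k/p\log\frac{n}{k}+\log n}$ is, after noting $k=|T|$ and $\log\frac{n}{k}=\log\frac{|T|}{|U|}$ up to the $+\log n$ slack, exactly the bound $\Ohsmall{\frac{|U|}{p}\log\frac{|T|}{|U|} + \log p + \log |T|}$ from Theorem~\ref{thm:bulkUpdate} (the $\log p$ and $\log|T|$ terms are both absorbed into $\log n$).

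Next I would choose separators by \emph{selection with double binary search} (Section~\ref{sec:selection}) applied to the array of $U$ and the tree $T$, splitting both into $2p-1$ pieces $I_x, T_x$ lying in disjoint key ranges except for matching indices, with $|I_x|\le|U|/p$ and $|T_x|\le|T|/p$. Then each PE runs a \emph{sequential} set operation on its pair $(I_x,T_x)$. For union this is the sequential union of Section~\ref{paragraph:seq_merge}, costing $\Ohsmall{|I_x|\log\frac{|T_x|}{|I_x|}}$. For intersection, difference, and symmetric difference I would use the same finger-search traversal of $T_x$ guided by the sorted $I_x$: for each element of $I_x$ locate its position in $T_x$, and depending on the operation either keep or delete the matched leaf of $T_x$, or insert the unmatched element of $I_x$; by Theorems~3 and~4 of \cite{DBLP:journals/acta/HuddlestonM82} the number of visited nodes and node splits/fuses is still $\Ohsmall{|I_x|\log\frac{|T_x|}{|I_x|}}$, so each PE spends $\Ohsmall{\frac{|U|}{p}\log\frac{|T|}{|U|}}$ time (using, as in the union-phase analysis, that $|I_x|\log(|T_x|/|I_x|)$ is maximized at the largest allowed $|I_x|$, and that when $|T_x|<e|I_x|$ the log term is $\Ohsmall{1}$ and the piece has size $\Ohsmall{|U|/p}$ so the cost is $\Ohsmall{|U|/p}$ anyway). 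Finally the \textbf{join} phase reassembles the $2p-1$ output subtrees into one tree in $\Ohsmall{\log p + \log|T|}=\Ohsmall{\log n}$ time by the non-optimal parallel join of Section~\ref{sec:parallel_join}; the $2p-1$ pieces come from disjoint key ranges, so concatenating them in order produces the correct set, and an empty piece (e.g.\ an $I_x$ that was entirely deleted by an intersection) is simply skipped.

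Summing, the split phase costs $\Ohsmall{\log|T|}$ (parallel split, Section~\ref{sec:parallel_split}), the separator selection costs $\Ohsmall{\log p}$, the per-PE sequential phase costs $\Ohsmall{\frac{|U|}{p}\log\frac{|T|}{|U|}}$, and the join phase costs $\Ohsmall{\log p+\log|T|}$, for a total of $\Ohsmall{\frac{|U|}{p}\log\frac{|T|}{|U|}+\log p+\log|T|}$, which is $\Ohsmall{k/p\log\frac{n}{k}+\log n}$. The main obstacle I anticipate is not the asymptotics but verifying that the Huddleston--Mehlhorn node-visit bound genuinely transfers from pure insertion (union) to the mixed insert/delete behavior needed for difference and symmetric difference: deletions cause fuses that can propagate up the tree, and one must argue — as in the amortized analysis behind the sequential bound — that the total number of structural changes over the whole sequence of $|I_x|$ finger-search-guided operations is still $\Ohsmall{|I_x|\log(|T_x|/|I_x|)}$ rather than $\Ohsmall{|I_x|\log|T_x|}$. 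I would handle this by the standard potential argument on $(a,b)$-trees (weak $(a,b)$-trees with $b\ge 2a$ give $\Ohsmall{1}$ amortized restructuring per update), combined with the observation that the finger moves between consecutive updates cover each tree edge a bounded number of times on the relevant search paths.
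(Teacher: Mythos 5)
Your proposal is correct and shares the paper's core idea: treat the smaller tree $U$ as a sorted update sequence and feed it through the split/process/join pipeline of Theorem~\ref{thm:bulkUpdate}. For union and set difference you and the paper do exactly the same thing (extract $U$ in sorted order, then bulk-insert respectively bulk-delete into $T$; the paper additionally notes that $T\setminus U$ with $|U|>|T|$ is handled by first computing $T\cap U$). Where you diverge is in the remaining two operations. The paper realizes intersection as a \emph{bulk search} for the elements of $U$ in $T$ followed by building a fresh tree from the resulting sorted sequence in $\Ohsmall{k/p+\log n}$ time, and realizes symmetric difference purely by the identity $U\bigtriangleup T=(U\setminus T)\cup(T\setminus U)$, i.e.\ by composing a constant number of already-established operations. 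You instead run a single mixed insert/delete/keep pass over each pair $(I_x,T_x)$. Your route is a constant factor cheaper for symmetric difference but forces you to confront the question you correctly flag at the end -- whether the Huddleston--Mehlhorn finger bound survives interleaved deletions -- whereas the paper's intersection-by-search-and-rebuild sidesteps restructuring entirely (searches do not modify $T$), and its difference step leans on the fact that Theorems~3 and~4 of \cite{DBLP:journals/acta/HuddlestonM82} already cover deletions in weak $(a,b)$-trees, so no new amortization argument is needed. One small caution: the theorem's bound as literally stated (with $k=\max\{|U|,|T|\}$ and $n=\max\{k,p\}$) does not algebraically match $\Ohsmall{\frac{|U|}{p}\log\frac{|T|}{|U|}+\log p+\log|T|}$; your identification $\log\frac{n}{k}=\log\frac{|T|}{|U|}$ papers over what is evidently a notational slip in the statement ($k$ should be the size of the smaller tree and $n$ should involve $|T|$), so it is worth stating the bound you actually prove in terms of $|U|$, $|T|$, and $p$ rather than forcing it into the given form.
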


\paragraph{Union}
Computing the union of two trees $U$ and $T$ can be implemented by
viewing the smaller tree as a sorted sequence of insertions. 
Then we can extract the elements of $U$ in sorted order using work 
$\Ohsmall{k}$ and span $\Ohsmall{\log n}$.
Adding the complexity of a bulk insertion from Theorem~\ref{thm:bulkUpdate} we get
total time $\Ohsmall{k/p\log \frac{n}{k}+\log n}$.

\paragraph{Intersection}
We perform a bulk search for the elements of $U$ in $T$ in time 
$\Ohsmall{k/p\log \frac{n}{k}+\log n}$ and build a search tree from this sorted sequence in time 
$\Ohsmall{k/p+\log n}$.

\paragraph{Set Difference $T\setminus U$}
If $|T|\geq |U|$ we can interpret $U$ as a sequence of
deletions.
This yields parallel time $\Ohsmall{k/p\log \frac{n}{k}+\log n}$. If $|U|> |T|$, we first
intersect $U$ and $T$
in time $\Ohsmall{k/p\log \frac{n}{k}+\log n}$ and then compute the
set difference $T\setminus U=T\setminus (T\cap U)$ in time
$\Ohsmall{k/p\log \frac{n}{k}+\log n}$.

\paragraph{Symmetric Difference}
We use the results for union and difference and apply the definition
$U\bigtriangleup T=(U\setminus T)\cup(T\setminus U)$.

\section{Experiments}
In this section, we evaluate the performance of our parallel $(a, b)$-trees
and compare them to the several number of contestants.
We also study our basic operations, join and split, in isolation.

\paragraph{Methodology} We implemented our algorithms using \texttt{C++}. Our implementation 
uses the \texttt{C++11} multi-threading library (implemented using the POSIX thread library)
and an allocator \texttt{tbb::scalable\_allocator}
from the Intel Threading-Building Blocks (TBB) library.
All binaries are built using \texttt{g++-4.9.2} with the \texttt{-O3} flag. We run our experiments
on Intel Xeon E5-2650v2 (2 sockets, 8 cores with Hyper-Threading, 32 threads)
running at 2.6 GHz with 128GB RAM. To decrease scheduling effects,
we pin one thread to each core.

Each test can be described by the size of an initial $(4, 8)$-tree ($T$), the size of a bulk update ($B$),
the number of iterations ($I$) and the number of processors ($p$). Each key is a 32-bit integer.
In the beginning of each test we construct an initial tree. During
each iteration we perform an incremental bulk update.
We use a uniform, skewed uniform\frage{todo: explain},
normal and increasing uniform distributions to generate
an initial tree and bulk updates in most of the tests.
The skewed distribution generates keys in smaller range than a uniform distribution.
The increasing uniform distribution generates the keys of a bulk update such that
the keys of the current bulk update are greater than the keys of the previous bulk
updates.

\paragraph{Split algorithms}
\label{paragraph:split_performance}
We present a comparison of the sequential split algorithm
and the parallel split algorithm from Section~\ref{sec:parallel_split}.
We split an initial tree into $31$ trees using $30$ sorted separators.
The sequential algorithm splits the pre-constructed tree into two trees using 
the first separator and the split operation from Section~\ref{paragraph:basic_split}.
Next, it continues to split using the second separator, and so on.
Figure~\ref{fig:split_test} shows the running times of the tests with a uniform distribution
(other distributions have almost the same running times).
The parallel split algorithm outperforms the sequential algorithm by a factor of $8.1$.
Also we run experiments with $p = 16$
and split an initial tree into $16$ trees. The speed-up of the parallel split is $4.6$.

\iftoggle{long}{%
	\begin{figure}[h]
		\centering
		\includegraphics[scale = 0.60]{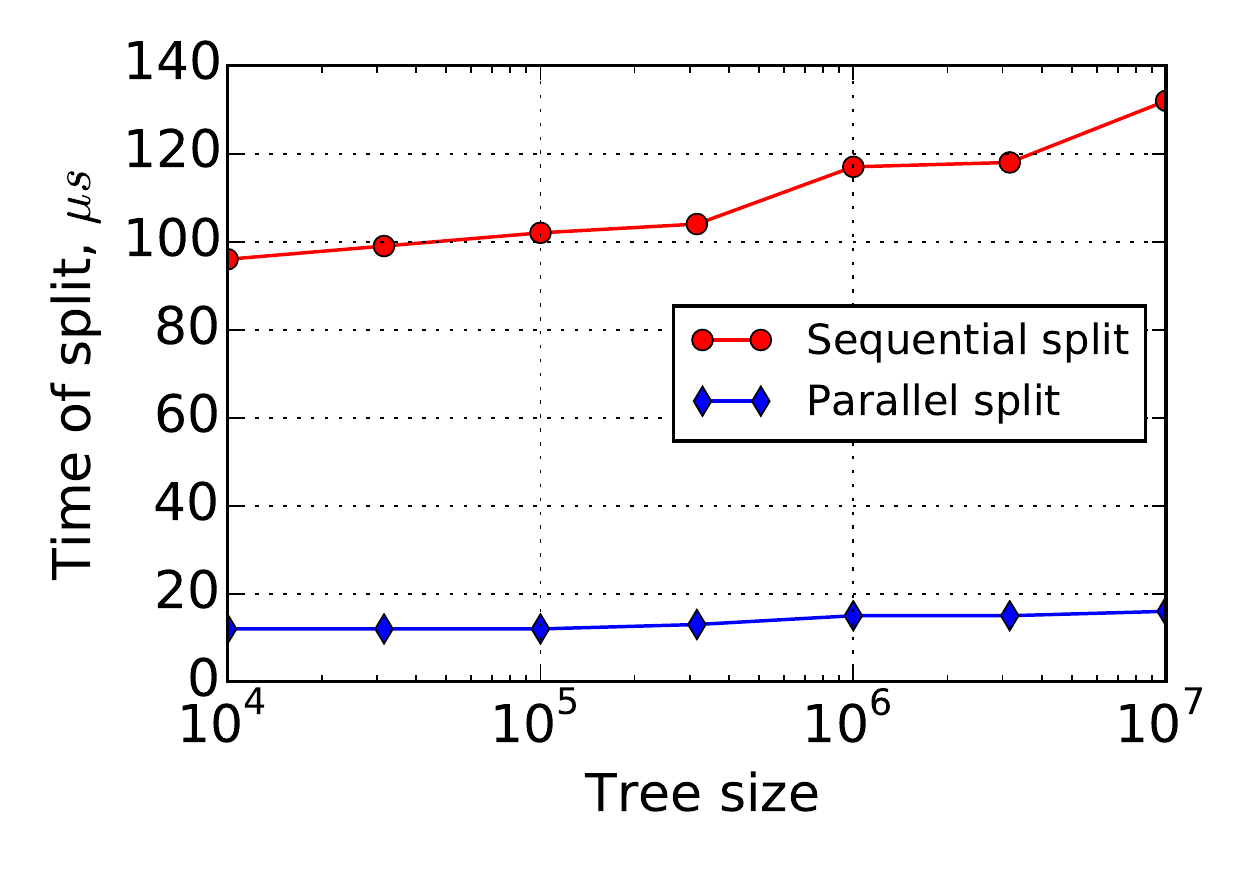}
		\caption{Comparison of split algorithms. $T = \sqrt{10} ^ i, i = 8, \dots, 14$,
			and $p = 31$}
		\label{fig:split_test}
	\end{figure}
}{}%

\paragraph{Join algorithms}
\label{paragraph:join_performance}
We present a comparison of different join algorithms.
We join $31$ trees using the following three algorithms:
(1) \textbf{SJ} is a sequential join algorithm. It joins the first tree
to the second tree using join operation from Section~\ref{paragraph:basic_join}.
Next, it joins the result of previous join and the third tree, and so on.
(2) \textbf{PPJ} is a parallel pairwise join algorithm that joins pairs of trees in parallel from
Section~\ref{sec:parallel_join}.
(3) \textbf{PJ} is a parallel join algorithm from Section~\ref{sec:lightweight_join}.
Figure~\ref{fig:time_uniform} shows the running times of the tests with a uniform distribution
(other distributions have almost the same running times).
The \textbf{PJ} algorithm is worse than the \textbf{SJ} and the \textbf{PPJ} algorithm 
by a factors of $1.9$ and $3.4$, respectively. The \textbf{PPJ} algorithm has a speed-up of $1.8$ 
compared to the \textbf{SJ}.
This can be explained by the involved synchronization overhead.
In the experiments with $p = 16$ and $16$ trees the \textbf{PPJ} and
\textbf{SJ} algorithms show the same running times. We explain this by the fact that 
hyper-threading advantages when there are a lot of dereferences of pointers and the number 
of trees is nearly doubled.

\iftoggle{long}{%
	\begin{figure}[h]
		\centering
		\includegraphics[scale = 0.60]{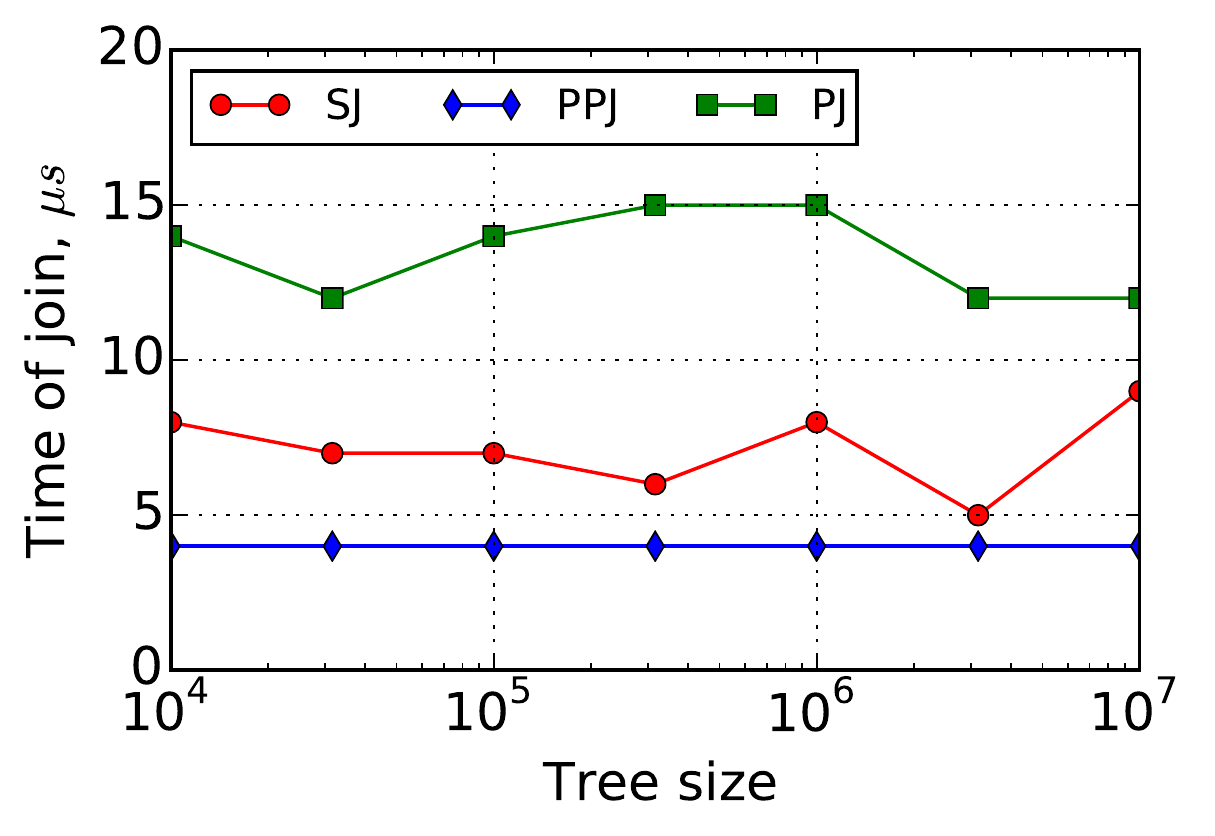}
		\caption{Comparison of join algorithms. $T = \sqrt{10} ^ i, i = 8, \dots, 14$,
			and $p = 31$}
		\label{fig:time_uniform}
	\end{figure}
}{}%

Additionally, we measure the number of visited nodes 
during the course of the join algorithms
to show the theoretical advantage of the the \textbf{PJ} algorithm over the
the \textbf{SJ} and the \textbf{PPJ} algorithms.
Figure~\ref{fig:nodes_number} shows the results of the tests with two initial trees, which are built using
a uniform and a skewed uniform(suffix "\_SU") distribution.

The \textbf{PJ} algorithm visits significantly less nodes 
than the \textbf{SJ} algorithm (by a factor of
$2.8$) on the tests with a uniform distributions. But it visits almost
the same number of nodes as the \textbf{PPJ} algorithm.
The \textbf{PJ\_SU} algorithm visits less nodes than the \textbf{SJ\_SU} and the \textbf{PPJ\_SU} 
algorithms 
(by factor of $3.9$ and $1.3$, respectively). We explain this by the fact that 
a skewed uniform distribution constructs an initial tree that is split into $31$ trees,
such that the first tree is significantly higher than other trees. Therefore,
the algorithms \textbf{SJ} and \textbf{PPJ} visit more nodes than the algorithm \textbf{PJ}
during the access to a spine node by rank.
This suggests that the \textbf{PJ} algorithm may outperform its competitors on instances with 
deeper
trees and/or additional work per node (for example, an I/O operation per node in external B-tree).

\iftoggle{long}{%
	\begin{figure}
		\centering
		\includegraphics[scale = 0.60]{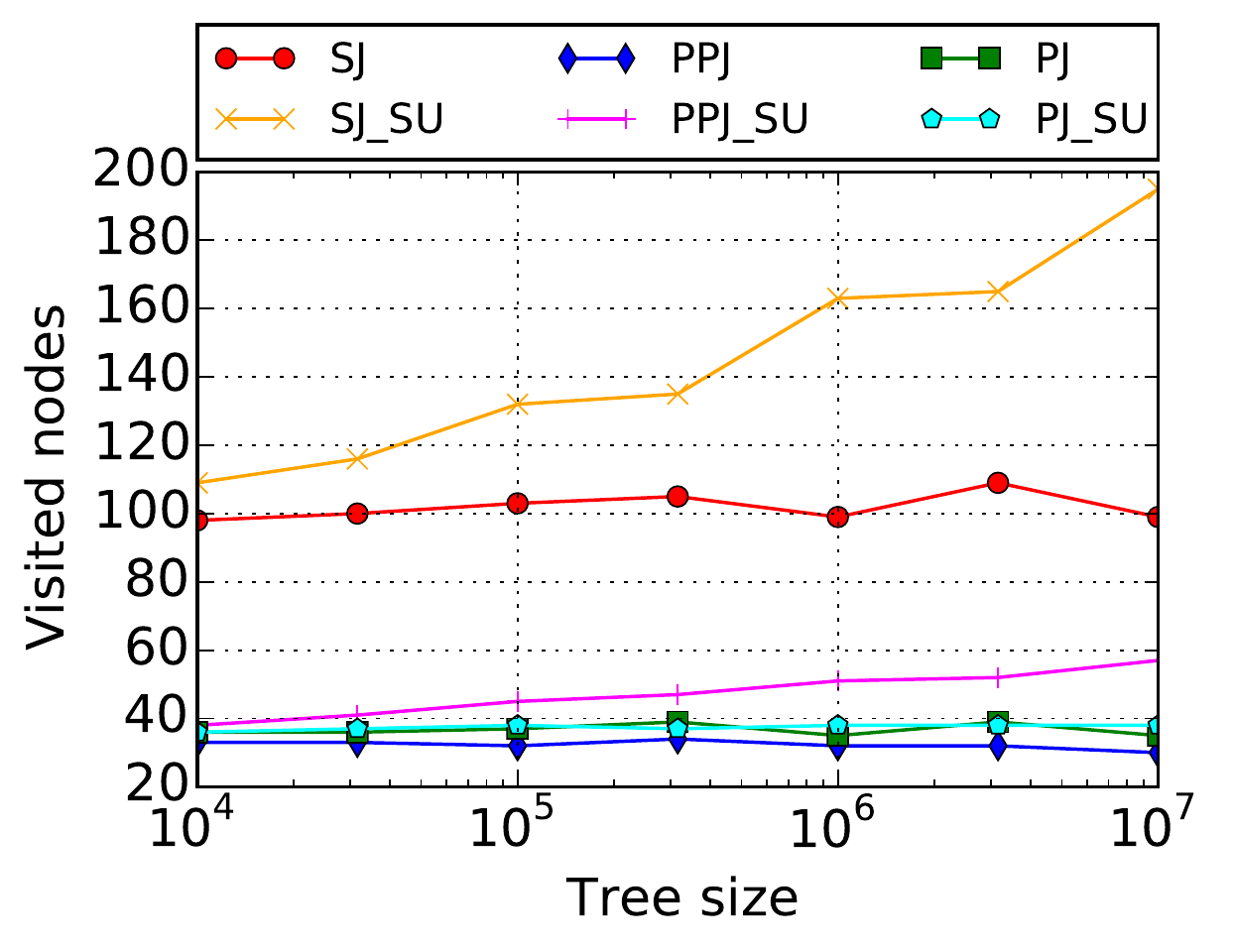}
		\caption{Visited nodes of the join algorithms.
			$T = \sqrt{10} ^ i, i = 8, \dots, 14$, and $p = 31$}
		\label{fig:nodes_number}
	\end{figure}
}{}%

\iftoggle{long}{}{%
	\begin{figure*}[!htb]
		\minipage{0.35\textwidth}
		\includegraphics[width=\linewidth]{Split_Btree_4_8_sz_10K_100M_q_100_it_1000_p_31.pdf}
		\caption{Split algorithms. \\$T = \sqrt{10} ^ i, i = 8, \dots, 14$, and $p = 31$}
		\label{fig:split_test}
		\endminipage
		\minipage{0.35\textwidth}
		\includegraphics[width=\linewidth]{Join_Btree_4_8_sz_10K_100M_q_100_it_1000_p_31.pdf}
		\caption{Join algorithms. \\$T = \sqrt{10} ^ i, i = 8, \dots, 14$, $p = 31$}
		\label{fig:time_uniform}
		\endminipage
		\minipage{0.35\textwidth}%
		\includegraphics[width=\linewidth]{Join_Visited_Nodes_Btree_4_8_sz_10K_100M_q_100_it_1000_p_31.pdf}
		\caption{Visited nodes of the join algorithms. $T = \sqrt{10} ^ i, i = 8, \dots, 14$, $p = 31$}
		\label{fig:nodes_number}
		\endminipage
		\hfill
		
		\minipage{0.35\textwidth}
		\includegraphics[width=\linewidth]{Btree_4_8_sz_100M_q_16_10M_it_10000_300_p_16.pdf}
		\caption{Bulk updates algorithms.}
		\label{fig:time_per_elem}
		\endminipage
		\centering
		\minipage{0.35\textwidth}
		\includegraphics[width=\linewidth]{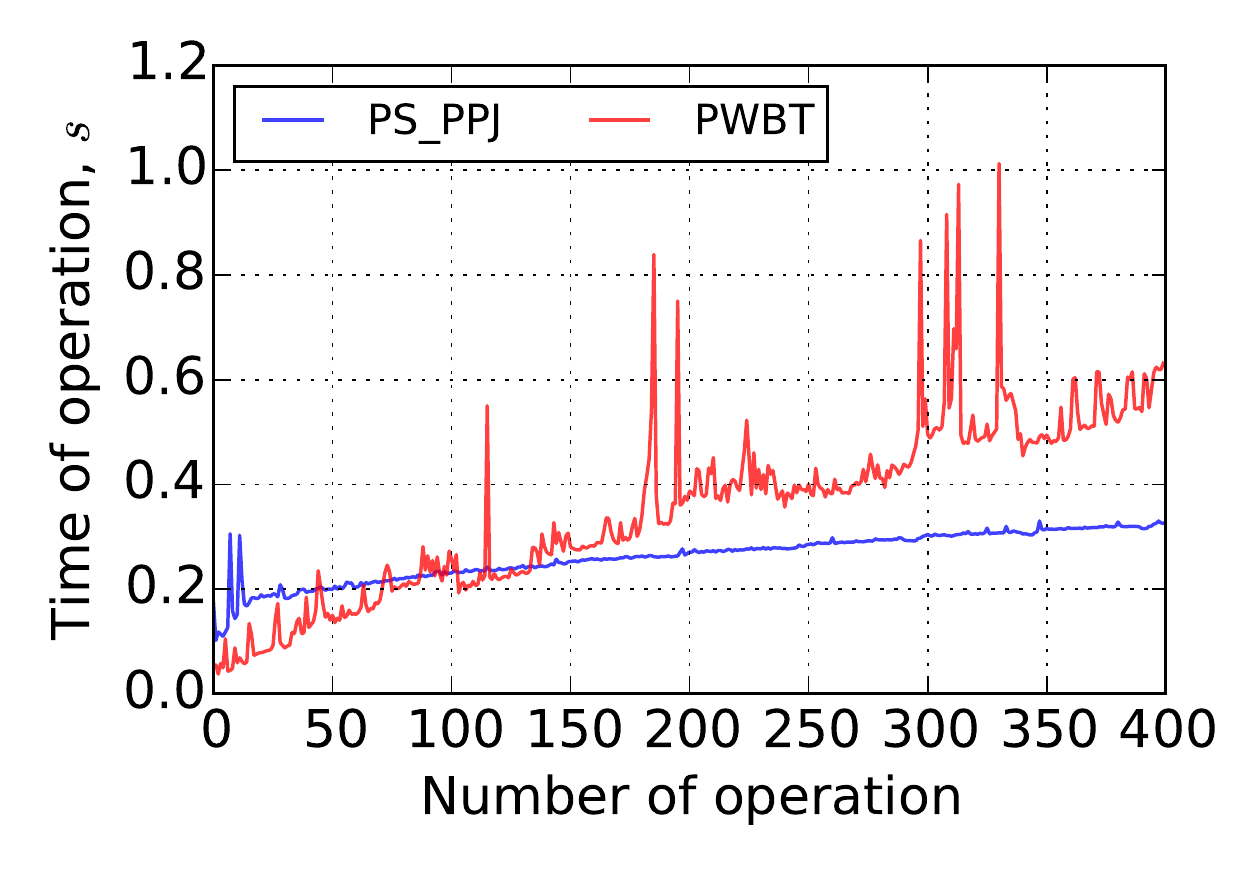}
		\caption{Running time of all operations. $T = 100M, B = 10M$ and $I = 400$}
		\label{fig:time_ops}
		\endminipage
	\end{figure*}
}%

\paragraph{Comparison of parallel search trees}
Here we compare our sequential and parallel implementations of search trees
and four competitors:%
\iftoggle{long}{%
	\begin{enumerate}
		\item \textbf{Seq} is a $(4, 8)$-tree with a sequential bulk updates from
		Section~\ref{paragraph:seq_merge};
		\item \textbf{PS\_PPJ} is a $(4, 8)$-tree with a \textbf{parallel split} phase, 
		a union phase and a \textbf{parallel pairwise join} 
		phase;
		\item \textbf{PWBT} is a Parallel Weight-Balanced B-tree~\cite{EKS14};
		\item \textbf{PRBT} is a Parallel Red-Black Tree~\cite{FriasS07};
		\item \textbf{PBST} is a Parallel Binary Search Tree~\cite{DBLP:journals/corr/BlellochFS16} (we compile it using
		\texttt{g++-4.8} compiler with \texttt{Cilk} support).
	\end{enumerate}
}{%
~(1) \textbf{Seq} is a $(4, 8)$-tree with a sequential bulk updates from
Section~\ref{paragraph:seq_merge};
(2) \textbf{PS\_PPJ} is a $(4, 8)$-tree with a \textbf{parallel split} phase, 
a union phase and a \textbf{parallel pairwise join} 
phase;
(3) \textbf{PWBT} is a Parallel Weight-Balanced B-tree~\cite{EKS14};
(4) \textbf{PRBT} is a Parallel Red-Black Tree~\cite{FriasS07};
(5) \textbf{PBST} is a Parallel Binary Search Tree~\cite{DBLP:journals/corr/BlellochFS16} (we 
compile it using
\texttt{g++-4.8} with \texttt{Cilk} support).
}

Figure~\ref{fig:time_per_elem} shows the measurements of the tests with a uniform distribution, 
where
$T = 100M, B = 16, \sqrt{10}^i, i = 4, \dots, 14$
($I = 10000$ for $B = 16, \dots, 100K$ and $I = 4G / B$ for $B = 316227, \dots, 10M$)
and $p = 16$ (we use 16 cores, since the \textbf{insert} phase is cache-efficient and
hyper-threading does not improve performance). We achieve relative speedup up to 12 over
sequential $(a,b)$-trees. Even for very small batches of size 100 we
observe some speedup. Note that the speed-ups of the \textbf{join} and
\textbf{split} phases are less than 12. But they do not affect the
total speed-up, since the time of the insertion phase dominates them.

On average, our algorithms are $1.8$ times faster than a \textbf{PWBT}. We
outperform the \textbf{PWBT} since a subtree rebalancing (linear of the subtree size)
can occur during an insert operation.
Also our data structure is $5.5$ times faster than a \textbf{PRBT}, $5.7$
times faster than a \textbf{PBST} and $10.7$ faster than a \textbf{Seq}. Note that
the \textbf{PRBT} and the \textbf{PBST} failed in the last four tests due to the lack of memory 
space.
Hence, we expect even greater speed-up of our algorithms compare to them.
For small batches, the speedup is larger which we attribute to the startup overhead
of using Cilk.

Additionally, we run the tests for the \textbf{PS\_PPJ} and the fastest competitor
the \textbf{PWBT} with the same parameters using a normal,
skewed uniform and increasing uniform distributions to generate bulk updates.
They perform faster in these tests than in the tests with
a uniform distribution due to the improved cache locality.
Although a subtree rebalancings in \textbf{PWBT} last longer than in the tests
with a uniform distributions, they occur less frequently.
On average, the \textbf{PS\_PPJ} is
$2.0$, $2.6$ and $2.1$ times faster than the \textbf{PWBT} in tests with a normal,
skewed uniform and increasing uniform distributions respectively.

Figure~\ref{fig:time_ops} shows another comparison of the \textbf{PS\_PPJ}
with the \textbf{PWBT}.
This plot shows the worst-case guaranties of the \textbf{PS\_PPJ}.
The spikes on the plot of the \textbf{PWBT} are due to the amortized cost
of operations.
We conclude that the \textbf{PS\_PPJ} is preferable in real-time applications
where latency is crucial.


\iftoggle{long}{%
	\begin{figure}
		\centering
		\includegraphics[scale = 0.60]{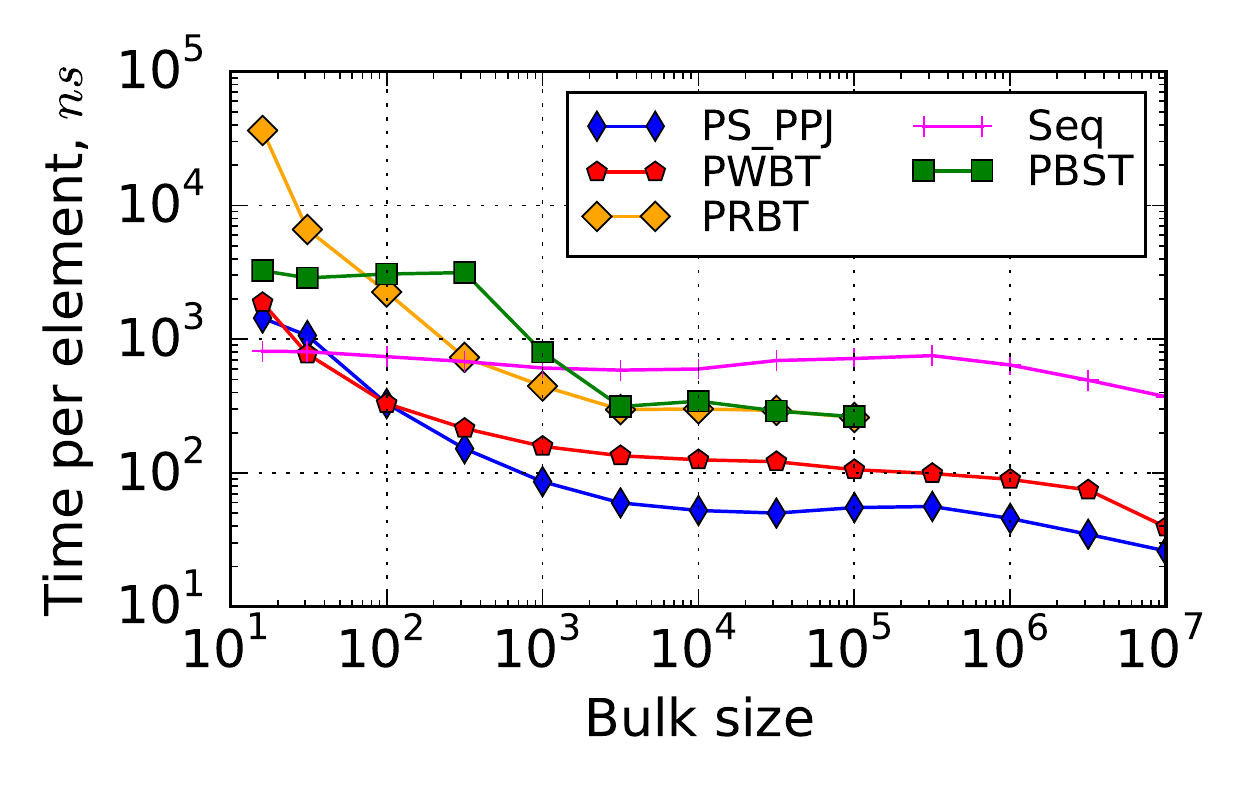}
		\caption{Bulk updates algorithms.}
		\label{fig:time_per_elem}
	\end{figure}
}{}%

\iftoggle{long}{%
	\begin{figure}
		\centering
		\includegraphics[scale = 0.60]{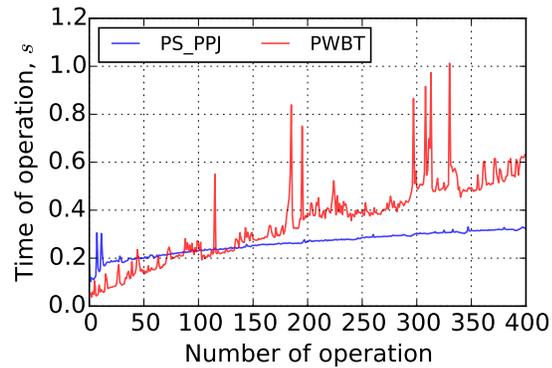}
		\caption{Running time of all operations. $T = 100M, B = 10M$ and $I = 400$}
		\label{fig:time_ops}
	\end{figure}
}{}%


\bibliographystyle{amsplain}
\bibliography{soda,diss}

\providecommand{\bysame}{\leavevmode\hbox to3em{\hrulefill}\thinspace}
\providecommand{\MR}{\relax\ifhmode\unskip\space\fi MR }
\providecommand{\MRhref}[2]{%
  \href{http://www.ams.org/mathscinet-getitem?mr=#1}{#2}
}
\providecommand{\href}[2]{#2}
\begin{thebibliography}{10}

\bibitem{DBLP:journals/corr/BlellochFS16}
Guy~E. Blelloch, Daniel Ferizovic, and Yihan Sun, \emph{Parallel ordered sets
  using join}, CoRR \textbf{abs/1602.02120} (2016).

\bibitem{Blelloch:1998:FSO:277651.277660}
Guy~E. Blelloch and Margaret Reid-Miller, \emph{Fast set operations using
  treaps}, 10th ACM Symposium on Parallel Algorithms and Architectures, SPAA,
  1998, pp.~16--26.

\bibitem{Brodal2006purelyfunctional}
Gerth~S. Brodal, Christos Makris, and Kostas Tsichlas, \emph{Purely functional
  worst case constant time catenable sorted lists}, ESA, LNCS, vol. 4168,
  Springer, 2006, pp.~172--183.

\bibitem{BCCOK10}
Nathan~G. Bronson, Jared Casper, Hassan Chafi, and Kunle Olukotun, \emph{A
  practical concurrent binary search tree}, 15th Symposium on Principles and
  Practice of Parallel Programming (PPoPP), ACM, 2010, pp.~257--268.

\bibitem{Brown:1979:FMA:322123.322127}
Mark~R. Brown and Robert~E. Tarjan, \emph{A fast merging algorithm}, J. ACM
  \textbf{26} (1979), no.~2, 211--226.

\bibitem{DBLP:journals/siamcomp/BrownT80}
Mark~R. Brown and Robert~E. Tarjan, \emph{Design and analysis of a data
  structure for representing sorted lists}, {SIAM} J. Comput. \textbf{9}
  (1980), no.~3, 594--614.

\bibitem{Cormen:2009:IAT:1614191}
Thomas~H. Cormen, Charles~E. Leiserson, Ronald~L. Rivest, and Clifford Stein,
  \emph{Introduction to algorithms, third edition}, 3rd ed., The MIT Press,
  2009.

\bibitem{EKS14}
Stephan Erb, Moritz Kobitzsch, and Peter Sanders, \emph{Parallel bi-objective
  shortest paths using weight-balanced {B}-trees with bulk updates}, 13th
  Symposium on Experimental Algorithms (SEA), LNCS, vol. 8504, Springer, 2014,
  pp.~111--122.

\bibitem{FriasS07}
Leonor Frias and Johannes Singler, \emph{Parallelization of bulk operations for
  {STL} dictionaries}, Euro-Par Workshops, LNCS, vol. 4854, Springer, 2007,
  pp.~49--58.

\bibitem{HLLS06}
Maurice Herlihy, Yossi Lev, Victor Luchangco, and Nir Shavit, \emph{A provably
  correct scalable skiplist}, 10th International Conference On Principles Of
  Distributed Systems (OPODIS), 2006.

\bibitem{Higham1994329}
Lisa Higham and Eric Schenk, \emph{Maintaining {B}-trees on an {EREW PRAM}},
  Journal of Parallel and Distributed Computing \textbf{22} (1994), no.~2,
  329--335.

\bibitem{DBLP:journals/acta/HuddlestonM82}
Scott Huddleston and Kurt Mehlhorn, \emph{A new data structure for representing
  sorted lists}, Acta Inf. \textbf{17} (1982), 157--184.

\bibitem{Jaja_Parallel_Algs}
Joseph~F. JaJa, \emph{An introduction to parallel algorithms}, Addison Wesley
  Longman Publishing Co., Inc., Redwood City, CA, USA, 1992.

\bibitem{Kaplan96purelyfunctional}
Haim Kaplan and Robert~E. Tarjan, \emph{Purely functional representations of
  catenable sorted lists.}, 28th ACM Symp. on Theory of Computing (STOC), 1996,
  pp.~202--211.

\bibitem{9c429ad074cd11dbbee902004c4f4f50}
Jyrki Katajainen, \emph{Efficient parallel algorithms for manipulating sorted
  sets}, 17th Annual Computer Science Conf., U. of Canterbury, 1994,
  pp.~281--288.

\bibitem{LehYao81}
Philip~L Lehman and Bing Yao, \emph{Efficient locking for concurrent operations
  on {B}-trees}, ACM Transactions on Database Systems (TODS) \textbf{6} (1981),
  no.~4, 650--670.

\bibitem{DBLP:books/sp/MehlhornS2008}
Kurt Mehlhorn and Peter Sanders, \emph{Algorithms and data structures: The
  basic toolbox}, Springer, 2008.

\bibitem{Moffat}
Alistair Moffat, Ola Petersson, and Nicholas~C. Wormald, \emph{Sorting and/by
  merging finger trees}, Algorithms and Computation, LNCS, vol. 650, Springer,
  1992, pp.~499--508.

\bibitem{Park2001415}
Heejin Park and Kunsoo Park, \emph{Parallel algorithms for red-black trees},
  Theoretical Computer Science \textbf{262} (2001), 415 -- 435.

\bibitem{paul1983parallel}
Wolfgang Paul, Uzi Vishkin, and Hubert Wagener, \emph{Parallel dictionaries on
  2--3 trees}, 10th Interntaional Colloquium on Automata, Languages and
  Programming (ICALP), vol. 154, Springer, 1983, pp.~597--609.

\bibitem{sjc-palmbmcp-11}
Jason Sewall et~al., \emph{{PALM}: {P}arallel architecture-friendly latch-free
  modifications to {B}+ trees on many-core processors}, PVLDB \textbf{4}
  (2011), no.~11, 795--806.

\bibitem{journals/jal/ShiloachV81}
Yossi Shiloach and Uzi Vishkin, \emph{Finding the maximum, merging, and sorting
  in a parallel computation model.}, J. Algorithms \textbf{2} (1981), 88--102.

\end{thebibliography}
\end{document}